\newtheorem{theorem}{Theorem}
\newtheorem{proposition}[theorem]{Proposition}
\newtheorem{coro}[theorem]{Corollary}
\newtheorem{lemma}[theorem]{Lemma}
\newcommand{\mb}{\mathbb}
\newcommand{\htau}{\hat{\tau}}
\begin{document}

\title{Causal completions as Lorentzian pre-length spaces}

\author{Luis Ake Hau, Saul Burgos and Didier A Solis}
\date{}




\maketitle


\abstract{In this work we revisit the notion of the (future) causal completion of a globally hyperbolic spacetime and endow it with the structure  of a Lorentzian pre-length space. We further carry out this construction for a certain class of generalized Robertson-Walker spacetimes.}

\bigskip

\textbf{Keywords:} {Lorentzian length space, causal boundary, warped product}

\medskip

\textbf{MSC Classification:} {53C23, 53C50, 83C90}

\section{Introduction}\label{sec1}

There is no doubt that Roger Penrose is one of the precursors of the mathematical foundations of causal  theory in General Relativity. His famous notes on causality \cite{penrose0} have become a classic textbook and a solid starting point for graduate students and researchers alike.  In this regard, one of Penrose's great achievements in Mathematical Relativity is the formal study of the asymptotic structure of spacetime, which arose with the introduction of the conformal compactification and Penrose diagrams \cite{zero}. It is through the notion of conformal infinity that the notion of a black hole is abstracted, thus allowing the development of  geometric and causal theoretic methods in their analysis. In spite of its success and widespread use, the conformal compactification approach to the asymptotic structure of spacetime has some downsides. Most notably, given a  spacetime, there is not a straightforward way to decide whether it admits a conformal compactification or not, or even in the affirmative case, a canonical way of constructing such a compactification. This issue was also tackled by Penrose. In their seminal work,  Geroch, Kronheimer and Penrose \cite{gkp} provided an alternative way to deal with the structure at infinity of a spacetime that relies exclusively in the causal structure of a distinguishing spacetime. Their construction follows in spirit the classical constructions in elementary geometry, where an ideal point (or point at infinity) is attached to a family of curves having a common end. Instead of using parallel rays (as is the case say, in hyperbolic geometry) they considered the causal structure of spacetime and declared that two curves have a common point at future (past) infinity if their corresponding chronological past (future) sets agree. Though elegant and simple at first sight,  the construction of the so called \emph{causal boundary} of a spacetime involves many subtleties, specially when trying to endow it with some additional structure, like a topology or causal relations (see \cite{flores:final,waveS} and references there in for an up to date account). Recently, significant results have been accomplished in this regard, most notably the introduction of a notion of black hole based on the causal boundary \cite{bhcausal}.

On the other hand, we have witnessed in the past few years a surge in the use of non-smooth geometric methods in Mathematical Relativity. Diverse settings as cone structures \cite{samann03,mingucone}, $C^0$ metrics \cite{galloling,grantc0, hevelin02,ling01,sbierski} and Lorentzian length spaces --to mention just a few-- have proven useful in exploring scenarios where (metric) smoothness can not be guaranteed, as the ones linked to recent observations \cite{ebh,ligo}. As a matter of fact, the use of non-smooth methods is not new. In the context of causality, Penrose and Kronheimer were the first to provide an abstract framework that does not require a metric structure at all \cite{kpcs}. Their notion of \emph{causal space} lays at the foundations of the theory of Lorentzian pre-length spaces first introduced by Kunzinger and S\"amman \cite{KSlls}. The purpose of this work is to present the future (or past) causal completion of a globally hyperbolic spacetime as a Lorentzian pre-length space, thus adding an interesting source of examples to this rapidly growing field \cite{samann02,hevelin,beran02,beran01,solismontes,hendike,kunzinger02}.

This work is organized as follows. In section \ref{sec:prelim}  we establish the basic facts about the causal completion and Lorentzian pre-lenth spaces, as well as the notation that will be used throughout this work. In section \ref{sec:gen} we prove that the future (past) causal completion admits a natural Lorentzian pre-length structure. Finally, in section \ref{sec:warped} we exhibit this structure in a class of warped product spacetimes.


\section{Preliminaries}\label{sec:prelim}

\subsection{Causal completions}

Let $(M,g)$ be a strongly causal spacetime and $\ll$, $\leq$ its usual chronological and causal relations, that is,  $p \ll q$ if and only if there exists a smooth future-directed timelike curve that joins $p$ with $q$, whereas $p \leq q$ if and only if there exists a smooth future-directed causal curve between these points. We define the \emph{chronological} (\emph{causal}) \emph{future} and \emph{past sets} in the standard way:  
\begin{eqnarray*}
I^{+}(p) =\{q \in M \mid p \ll q\}, &\qquad& J^{+}(p)=\{q \in M \mid p \leq q\},\\
I^{-}(p) =\{q \in M \mid q \ll p\}, &\qquad& J^{-}(p)=\{q \in M \mid q \leq p\}.
\end{eqnarray*}
A sequence of points $\{x_{n}\}$ is called a \emph{future-directed chain} if $x_{n} \ll x_{n+1}$ for all $n$ and \emph{past directed} if $x_{n+1} \ll x_{n}$ for all $n \in \mathbb{N}$. Moreover, it will be called \emph{inextensible} if $\{x_{n}\}$ is not convergent. A subset $P \subseteq M$ is called \emph{past set} if $P=I^{-}(P)$ and for a subset $S \subset M$ the \emph{common past} is defined as 
\[
\downarrow S=I^{-}(\{q \in M \mid q \ll s \ \forall s \in S\}).
\] 
We can define the notions of \emph{future sets} and the \emph{common future} $\uparrow S$ in a time dual way. We will say that $P$ is an \emph{indecomposable past set} (or IP for short)  if $P$ cannot be written as $P=P_{1} \cup P_{2}$ where $P_{1}, P_{2} \subset P$ are disjoint proper past subsets of $P$.  As it turns out, there are only two classes of indecomposable past sets: $(1)$ the chronological past of points $I^{-}(p)$, which will be called \emph{proper idecomposable past sets}  (PIP) and $(2)$ the chronological past of inextensible future-directed chains $I^{-}(\{x_{n}\})$ which will be called \emph{terminal idecomposable past sets} (TIP).

The \emph{future causal completion} $\hat{M}$ of $(M,g)$ is the set of all idecomposable past sets (IPs). Observe that being $(M,g)$ strongly causal, then it is past distinguishing. Hence if $I^-(p)=I^-(q)$, then $p=q$. Thus, we have that any point $p \in M$ determines a unique PIP. The \emph{future causal boundary} $\hat{\partial}M$ is then identified with the TIPs. Thus we have

\begin{equation*}
	\begin{split}
IPs \equiv & PIPs \cup TIPs, \\ 	
\hat{M}\equiv & M \cup \hat{\partial}M.
\end{split}
\end{equation*}

In a similar way we can define the \emph{past causal completion} as $\check{M} \equiv IFs$ and the \emph{past causal boundary} as  $\check{\partial}M=TIFs$. That is,

\begin{equation*}
	\begin{split}
		IFs \equiv & PIFs \cup TIFs \\ 	
		\check{M}\equiv & M \cup \check{\partial}M
	\end{split}
\end{equation*}   

Since both $\hat{M}$ and $\check{M}$ include a copy of the spacetime $(M,g)$, it is natural trying to define the total causal completion  of $(M,g)$ as $\hat{M}\cup\check{M}$, where the PIP $I^-(p)$ is identified with the PIF $I^+(p)$ ---since both represent the point $p\in M$. However, it was observed early on that such a construction may lead to inconsistencies, as in some cases further identifications on the boundaries $\hat{\partial}M$ and $\check{\partial}M$ ought to take place \cite{gkp,marolf,waveS,Sz}.  Thus finding an approach to constructing the causal boundary that works in full generality proved to be a delicate task, which was completed only recently \cite{flores:final}.   In order to avoid such intricacies, we will focus only on the future causal completion $\hat{M}$.

Notice that, as the notation suggest, the  relation $\hat{\ll}$ on $\hat{M}$ given by
\begin{equation}
	\label{eq2}
	\begin{split}
	P \hat{\ll} Q &\Leftrightarrow \exists q \in Q\setminus P {\text{ such that }} P \subset I^{-}(q)  \\
	\end{split}
\end{equation}
is indeed transitive. In fact, if $P \hat{\ll} Q \hat{\ll} R$ there exist $q \in Q \setminus P$ and $r \in R \setminus Q$ such that $P \subset I^{-} (q) \subset Q \subset I^{-}(r) \subset R$, thus $P \hat{\ll} R$. 

Hence, we can think of $\hat{\ll}$ as providing a chronological structure on $\hat{M}$ \cite{harrischro}. The chronological future and past sets so induced will be denoted by $\hat{I}^+(P)$ and $\hat{I}^-(P)$.

Further, we can endow $\hat{M}$ with a sequential topology which is compatible with the chronology $\hat{\ll}$ just defined.\footnote{Recall that on any spacetime $(M,g)$ the chronological sets $I^+(p)$, $I^-(p)$ are open in the manifold topology.} Thus, consider the  limit operator $\hat{L}_{chr}$ over the sequences of past sets given by
\begin{equation}
\label{eq3}
P \in \hat{L}_{chr}(\{P_{n}\}\}) \Leftrightarrow P \subset LI(\{P_{n}\}) \text{ and it is maximal in } LS(\{P_{n}\}).\footnote{The theoretical inferior and superior limits of subsets are defined as $LI(\{P_{n}\})=\bigcup_{n=1}^{\infty}\bigcap_{m=n}^{\infty} P_{m}$ and $LS(P_{n})=\bigcap_{n=1}^{\infty} \bigcup_{m=n}^{\infty} P_{m} $, respectively.}
\end{equation}
and define the \emph{future chronological topology} $\hat{\mathcal{T}}_{chr}$ by its closed subsets as follows: a subset $C \subset \hat{M}$ is closed if and only if for any sequence $\sigma \subset C$ the contention $\hat{L}_{chr}(\sigma) \subset C$ holds.  Thus we have the following result (see the proof of Thrm. 3.27 in \cite{flores:final})
\begin{theorem} 
\label{teo:propcomple}
Let $(M,g)$ be a strongly causal spacetime and $\hat{M}$ its future causal completion endowed with the chronological structure induced by \eqref{eq2} and the topology induced from the chronological limit \eqref{eq3}. Then:
	\begin{itemize}
		\item[(i)] The inclusion $M \hookrightarrow \hat{M}$ is continuous. Moreover, the restriction of the chronological topology to $M$ is the manifold topology.%
		\item[(ii)]  The future causal completion is complete: any future directed chain $\{P_n\}$ in $\hat{M}$ converges in $\hat{\mathcal{T}}_{chr}$. In particular, any future inextensible timelike curve $\gamma$ on $M$ 
		has an endpoint in $\hat{M}$.
		
		\item[(iii)] The sets $\hat{I}^+(P)$ and  $\hat{I}^-(P)$ are open for all $P \in \hat{M}$.
		
		\item[(iv)] $(\hat{M},\hat{\mathcal{T}}_{chr})$ is a $T_1$ topological space.
		
	\end{itemize}
\end{theorem}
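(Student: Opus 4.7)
The plan is to verify each item directly from the definitions of $\hat{L}_{chr}$, the IP/PIP/TIP structure in a strongly causal spacetime, and the sequentially-closed characterization of $\hat{\mathcal{T}}_{chr}$.

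For (i), I would show continuity of the inclusion $p\mapsto I^-(p)$ by checking that $p_n\to p$ in the manifold topology implies $I^-(p)\in\hat{L}_{chr}(\{I^-(p_n)\})$. The containment $I^-(p)\subset LI(\{I^-(p_n)\})$ is immediate from openness of chronological futures: $q\ll p$ gives $p_n\in I^+(q)$ eventually. For maximality of $I^-(p)$ in $LS(\{I^-(p_n)\})$, I appeal to past-distinguishing: any past set $Q\supsetneq I^-(p)$ contained in $LS$ would contain a point $q^*\notin I^-(p)$ lying in $I^-(p_{n_k})$ for infinitely many $k$, forcing $q^*\leq p$ in the limit and hence $q^*\in I^-(p)$, a contradiction. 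For the reverse direction---that the subspace topology on $M$ agrees with the manifold topology---a manifold-divergent subsequence must eventually leave some small chronological diamond around $p$, which under strong causality blocks $\hat{L}_{chr}$-convergence.

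For (ii), a future chain $P_1\hat{\ll}P_2\hat{\ll}\cdots$ is strictly increasing by definition, so $P:=\bigcup_nP_n$ is automatically a past set with $LI(\{P_n\})=LS(\{P_n\})=P$. Indecomposability follows from the standard argument: a decomposition $P=A\sqcup B$ into proper disjoint past subsets would force each $P_n$ into one piece, contradicting monotonicity. Hence $P\in\hat{M}$ and $\hat{L}_{chr}(\{P_n\})=\{P\}$. For the endpoint claim, an inextensible future-directed timelike curve $\gamma$ produces such a chain via $P_n:=I^-(\gamma(t_n))$ for $t_n\to\sup$, whose limit is the TIP $I^-[\gamma]\in\hat{\partial}M$.

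For (iii), it suffices to show that the complements $\hat{I}^\pm(P)^c$ are sequentially closed. The $\hat{I}^+(P)$ case is direct: if $Q_n\notin\hat{I}^+(P)$ for all $n$ but $Q\in\hat{L}_{chr}(\{Q_n\})\cap\hat{I}^+(P)$, then the witness $q\in Q\setminus P$ with $P\subset I^-(q)$ lies in $LI(\{Q_n\})$, hence in $Q_n\setminus P$ eventually, giving $P\hat{\ll}Q_n$---contradicting $Q_n\notin\hat{I}^+(P)$. The $\hat{I}^-(P)$ case I expect to be the main obstacle. Given $Q\hat{\ll}P$ via $p\in P\setminus Q$ with $Q\subset I^-(p)$, I would exploit that $P$ is an IP and the density of $\ll$ to select a cofinal chain $p\ll p_1\ll p_2\ll\cdots$ inside $P$ with $\bigcup_jI^-(p_j)=P$, then argue dichotomously on a putative subsequence $Q_{n_k}\notin\hat{I}^-(P)$: either $P\subset Q_{n_k}$ cofinally, contradicting maximality of $Q\subsetneq P$ in $LS(\{Q_{n_k}\})$; or $Q_{n_k}\not\subset I^-(p_j)$ cofinally for some $j$, again forcing a past set strictly containing $Q$ into $LS$ and contradicting maximality.

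Finally, (iv) is immediate: any sequence contained in a singleton $\{Q\}$ is constant, so $LI=LS=Q$ and $Q$ is the unique past set maximal in $Q$; hence $\hat{L}_{chr}(\{Q,Q,\ldots\})=\{Q\}$, every singleton is closed, and $(\hat{M},\hat{\mathcal{T}}_{chr})$ is $T_1$.
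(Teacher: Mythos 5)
The paper does not actually prove this theorem --- it is imported wholesale from the literature (see Thm.~3.27 of \cite{flores:final} and the results around it) --- so your attempt has to stand on its own. Parts (ii) and (iv), and the $\hat{I}^+(P)$ half of (iii), are correct and are essentially the standard arguments. The other two items each contain a genuine gap, and in both cases the gap sits exactly where strong causality has to do real work. In (i), your maximality step concludes from ``$q^*\in I^-(p_{n_k})$ for infinitely many $k$ and $p_{n_k}\to p$'' that $q^*\leq p$ and hence $q^*\in I^-(p)$. Neither implication holds in a merely strongly causal spacetime: the causal relation need not be closed (in $\mathbb{R}^2_1$ with the closed spacelike half-ray $\{(1,x):x\geq 1\}$ removed, taking $q^*=(0,0)$, $p=(2,2)$, $p_n=(2+1/n,2)$ gives $q^*\ll p_n$ for all $n$ yet $q^*\not\leq p$), and even where $q^*\leq p$ does hold it only places $q^*$ in $J^-(p)$, which is perfectly compatible with $q^*\notin I^-(p)$, so no contradiction with your hypothesis follows. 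A correct argument must exploit that the competing IP $Q$ is an \emph{open} past set (hence, if $Q\supsetneq I^-(p)$, it meets the complement of $\overline{I^-(p)}$) and then bring in strong causality; this is precisely where the proof in \cite{flores:final} spends its effort.

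In (iii) you rightly flag the openness of $\hat{I}^-(P)$ as the hard half, but the dichotomy you propose does not close it. The negation of $Q_{n_k}\hat{\ll}P$ is that $Q_{n_k}\not\subset I^-(x)$ for \emph{every} $x\in P$ (the requirement $x\notin Q_{n_k}$ is automatic, since $Q_{n_k}\subset I^-(x)$ with $x\in Q_{n_k}$ would give $x\ll x$). So your second alternative always holds and the first is not forced by anything; more importantly, the conclusion of the second horn --- ``forcing a past set strictly containing $Q$ into $LS(\{Q_{n_k}\})$'' --- is exactly the content of the theorem and is asserted rather than proved. From $Q_{n_k}\not\subset I^-(p_j)$ you only obtain witnesses $y_k\in Q_{n_k}\setminus I^-(p_j)$ that vary with $k$; to violate maximality of $Q$ you must exhibit a single IP $R\supsetneq Q$ every point of which lies in infinitely many $Q_{n_k}$, and nothing in your argument produces such an $R$ (a priori the ``excess'' portions of the $Q_{n_k}$ could be pairwise almost disjoint, leaving no trace in $LS$ outside $Q$; ruling this out is the crux). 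Your first horn is fine as far as it goes: if $P\subset Q_{n_k}$ cofinally then $P\subset LS(\{Q_{n_k}\})$ and, since $p\in P\setminus Q$, maximality of $Q$ fails. What is missing is the argument covering everything between these two extremes.
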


Since there are examples in which $(\hat{M},\hat{\mathcal{T}}_{chr})$ is not Hausdorff, we can not aim to furnish a manifold structure on it for such cases. However, some extra structure can be added in particular cases. For instance, a linear connection can be defined on $\hat{M}$ provided that $(M,g)$ is static and spherically symmetric  \cite{harriscon}.

\subsection{Lorentzian pre-length spaces}

In their remarkable paper \cite{kpcs}, Kronheimer and Penrose developed the notion of a causal space by abstracting the fundamental properties of the chronological and causal relations. Such an axiomatic approach to causality has proven useful in many circumstances, for instance in the causet approach to quantum gravity \cite{surya}. In recent times, the search for applying synthetic geometrical methods to mathematical relativity sparkled a renewed interest in developing causality in an abstract setting. Lorentizan pre-length spaces are a refinement of the notion of causal spaces that incorporates a time distance function, thus providing an analog for the length structure that serves as a building block for the well established synthetic theory in metric spaces \cite{burago}.  

As in the seminal work by Kunzinger and S\"amann \cite{KSlls} we define a Lorentzian pre-length space  $(X,d,\ll,\leq, \tau)$ as a metric space $(X,d)$ along with two relations $\ll$, $\leq$,  ---named \emph{chronological} and \emph{causal}--- and a \emph{time separation function} $\tau: X\times X\to [0,\infty]$ that satisfies the following axioms:
\begin{enumerate}
\item $\leq$ is a pre-order,
\item  $\ll$ is a transitive relation contained in $\leq$,
\item $\tau$ is a lower semi-continuous function ---with respect to $d$--- satisfying
\begin{itemize}
    \item $\tau(x,z)\geq \tau(x,y) + \tau(y,z)$ for all $x\leq y \leq z$,
    \item $\tau(x,y)>0$ if and only if $x\ll y$.
\end{itemize}
\end{enumerate}
  
 As an immediate consequence of the definition we have two of the most important features of the causal structure of a spacetime: (i) the chronological sets $I^+(p)$, $I^-(p)$ are open, and (ii) either $x\leq y\ll z$ or $x\ll y\leq z$ implies $x\ll z$.\footnote{This is commonly known as the \emph{push up property}.}  These properties, along some additional structure enable us to build a causality ---an even a causal hierarchy--- that closely resembles the usual causal structure of a smooth spacetime \cite{ACS,KSlls}.  Examples of Lorentzian pre-length spaces include a wide variety of structures, like cones \cite{samann03,mingucone}, causally plain $C^0$ spacetimes \cite{grantc0}, contact structures \cite{hendike} and Lorentzian taxicab-type spaces \cite{solismontes}.


\section{Lorentzian pre-length space structure on $\hat{M}$}\label{sec:gen}

As we discuss in the previous section, the topological space $(\hat{M},\hat{\mathcal{T}}_{chr})$ might not be metrizable in general. Thus, in order to construct a Lorentzian pre-length space structure on it, a refinement on the topology is in order.  In a recent work \cite{costaflores:newtopology}, Costa, Flores and Herrera studied the so called \emph{closed limit topology} $\hat{\mathcal{T}}_{c}$ (or CLT for short). The Hausdorff limit operator
\begin{equation}
\hat{L}_{H}(\{P_{n}\})=\{P \in \hat{M} \mid P=\liminf(\{P_n\})=\limsup(\{P_{n}\})\}
\end{equation}
generates the closed sets in $\hat{\mathcal{T}}_{c}$ as follows: a subset $C \subset \hat{M}$ is closed if and only if $\hat{L}_{H}(\{P_{n}\}) \subset C$ for every sequence $\{P_{n}\} \subset C$.\footnote{Observe that  $\# L_{H}(\sigma) \in \{0,1\}$ and thus $\mathcal{T}_{c}$ is a {\it first order topology} (see \cite{covering}). In other words, the following equivalence holds 
\[
\{P_{n}\} \rightarrow_{\hat{\mathcal{T}}_{c}} P \Leftrightarrow \hat{L}_{H}(\{P_{n}\})=\{P\}.
\] }
We now summarize the relevant features of the CLT topology in globally hyperbolic spacetimes (refer to Thrms 4.1 and 4.2 in \cite{costaflores:newtopology})
\begin{theorem}\label{teo:clt}
If $(M,g)$ is globally hyperbolic, then, the following statements hold for the topological space $(\hat{M},\hat{\mathcal{T}}_{c})$:
\begin{itemize}
	\item[$(i)$] The natural inclusion $i: M \rightarrow \hat{M}$ given by $i(p)=I^{-}(p)$ is an open continuous map. Moreover, $i(M)$ is an open dense subset of $\hat{M}$, the induced topology on $M$ is the manifold topology, $\hat{\partial} M$ is closed and $(\hat{M},\hat{\mathcal{T}}_{c})$ is second countable.
	
	\item[$(ii)$] The chronological sets $\hat{I}^{\pm}(P)$ are open subsets for all $P \in \hat{M}$.
	
	\item[$(iii)$] Any future directed chain in $\{P_{n}\} \subset \hat{M}$ converges in $\hat{\mathcal{T}}_{c}$
		\item[$(iv)$] The topological space $(\hat{M},\hat{\mathcal{T}}_{c})$ is metrizable.
\end{itemize}
\end{theorem}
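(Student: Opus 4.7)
My plan is to exploit two core features of the setup. First, the closed limit topology $\hat{\mathcal{T}}_c$ is first-order, so every statement about open/closed sets, continuity, and convergence reduces to a sequential statement phrased in terms of $\hat{L}_H$. Second, global hyperbolicity provides crucial regularity: $J^-(p) = \overline{I^-(p)}$ is closed, and every IP $P$ has the \emph{successor property} that for each $p \in P$ there exists $p' \in P$ with $p \ll p'$. With these two tools in hand, I would tackle (i)--(iv) in order, treating the openness of $\hat{I}^\pm$ and the verification of regularity for metrizability as the delicate points.

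For (i), sequential continuity of $i$ at $p$ reduces to showing $p_n \to p$ implies $I^-(p_n) \to I^-(p)$ in the Hausdorff sense. The liminf inclusion is immediate from openness of $I^-(p)$: if $q \ll p$ then $q \ll p_n$ eventually. For the limsup inclusion, a cluster point $q$ of any subsequence of $I^-(p_n)$ lies in $J^-(p) = \overline{I^-(p)}$ by global hyperbolicity, and a successor argument pushes it into $I^-(p)$. Openness of $i$ is the reverse manipulation, using chronological diamonds in $M$ as building blocks. Density and closedness of $\hat{\partial}M$ follow from writing any TIP as $I^-(\{x_n\})$ for an inextensible chain $\{x_n\}$ and observing that the monotone sequence $\{I^-(x_n)\}$ of PIPs Hausdorff-converges to it. Second countability is then obtained by fixing a countable base $\{U_k\}$ for $M$ and generating a countable base of $\hat{\mathcal{T}}_c$ out of subbasic sets of the form $\{P \in \hat{M} : P \cap U_k \neq \emptyset\}$, intersected with chronological neighborhoods.

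For (ii), given $Q_0 \in \hat{I}^+(P)$ witnessed by $q \in Q_0 \setminus P$ with $P \subset I^-(q)$, the successor property supplies $q' \in Q_0$ with $q \ll q'$; moreover $q' \notin P$, for otherwise the past set $P$ would already contain $q$. Any CLT-convergent sequence $Q_n \to Q_0$ has $q' \in Q_0 = \liminf Q_n$, hence $q' \in Q_n$ eventually, and the chain $P \subset I^-(q') \subset Q_n$ with $q' \in Q_n \setminus P$ yields $P \hat{\ll} Q_n$. The dual argument for $\hat{I}^-(P)$ selects $p \ll p'$ in $P$ with $Q_0 \subset I^-(p)$ and uses $\overline{I^-(p)} = J^-(p) \subset I^-(p')$, together with $\limsup Q_n = Q_0$, to force $Q_n \subset I^-(p')$ eventually modulo a tail argument. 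For (iii), given a chain $\{P_n\}$ set $P := \bigcup_n P_n$; monotonicity gives $\liminf P_n = \limsup P_n = P$, and indecomposability of $P$ is a direct consequence of the chain property, so $\hat{L}_H(\{P_n\}) = \{P\}$ and $P_n \to P$ in $\hat{\mathcal{T}}_c$.

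Finally, for (iv) I would invoke Urysohn's metrization theorem: second countability is in hand from (i); Hausdorffness follows from first-orderness together with the fact that two distinct IPs cannot arise as the Hausdorff limit of the same sequence; regularity is where the bulk of the work lives, and I would build it by constructing, around each $P \in \hat{M}$, a neighborhood base out of chronological diamonds $\hat{I}^+(P_1) \cap \hat{I}^-(P_2)$ whose closures can be separated from a given closed set using (ii). The main obstacle I anticipate across the whole argument is the set-difference clause ``$q \in Q \setminus P$'' in the definition of $\hat{\ll}$: set differences are not preserved under Hausdorff limits, and it is precisely at those junctures that the IP successor property and the global-hyperbolicity identity $J^- = \overline{I^-}$ must be applied carefully to keep chosen witnesses on the correct side of the moving IPs.
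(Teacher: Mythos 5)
First, a point of orientation: the paper does not prove Theorem \ref{teo:clt} at all --- it is imported from Theorems 4.1 and 4.2 of \cite{costaflores:newtopology} --- so your sketch has to stand on its own, and as written it has gaps. The most concrete one is in your continuity argument for $(i)$. If $p_n\to p$ with $p_n$ approaching from the chronological \emph{future}, then push-up gives $q\ll p_n$ for every $q\in J^-(p)$ and every $n$, so the set-theoretic $\limsup$ (and $\liminf$) of $\{I^-(p_n)\}$ contains all of $J^-(p)$ and therefore cannot equal the open set $I^-(p)$; no ``successor argument'' can push a point of $J^-(p)\setminus I^-(p)$ into $I^-(p)$, because it simply does not lie there. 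The resolution in \cite{costaflores:newtopology} is that $\hat{L}_H$ is defined via Hausdorff \emph{closed} limits of the closures $\overline{P_n}=J^-(p_n)$ (one asks $\overline{P}=\liminf\overline{P_n}=\limsup\overline{P_n}$), not via the naive set-theoretic limits of the open sets $P_n$ themselves; your argument needs to be rebuilt on that definition, and the same ambiguity silently affects every subsequent step that quotes ``$\liminf Q_n=\limsup Q_n=Q$''.

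Two further steps do not go through as stated. In $(ii)$, for the openness of $\hat{I}^-(P)$ you want to upgrade $\limsup Q_n\subset I^-(p)$ to ``$Q_n\subset I^-(p')$ eventually''; this fails in general, because each $Q_n$ may contain points escaping to infinity outside $I^-(p')$ which, appearing in only finitely many of the $Q_n$, leave no trace in the $\limsup$ --- controlling these escaping points is precisely where the work lies in \cite{costaflores:newtopology}, and ``modulo a tail argument'' hides it. In $(iv)$, the regularity needed for Urysohn is acknowledged but not actually constructed, and your Hausdorffness claim conflates the limit operator with topological convergence: for a topology \emph{generated} by a limit operator, a sequence can converge topologically to $P$ without $P\in\hat{L}_H(\sigma)$, so uniqueness of Hausdorff limits does not by itself yield uniqueness of topological limits. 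The equivalence $\{P_n\}\to_{\hat{\mathcal{T}}_c}P\Leftrightarrow\hat{L}_H(\{P_n\})=\{P\}$ (the ``first order'' property quoted in the paper's footnote) is itself a theorem requiring proof, and every sequential argument in your sketch --- continuity, openness, closedness of $\hat{\partial}M$ --- tacitly relies on it. Parts $(iii)$ (monotone union of a chain is an IP and is both the $\liminf$ and $\limsup$) and the $\hat{I}^+$ half of $(ii)$ are fine.
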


Notice that in general, the chronological topology is coarser than $CLT$. Moreover, they coincide when their corresponding limit operators agree. The next result provides necessary and sufficient conditions for that (see \cite[Thrm. 5.3]{costaflores:newtopology})

\begin{theorem}\label{teo:opagree}
Let $\hat{M}$ be a future completion endowed with the limit operators $L_{H}$ and $\hat{L}_{chr}$.
Both limit operators coincide if, and only if, the chronological topology is Hausdorff.
\end{theorem}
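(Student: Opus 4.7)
The plan is to exploit two features of the definitions: $\hat{L}_{H}(\sigma)$ contains at most one element by construction, whereas $\hat{L}_{chr}(\sigma)$ can a priori contain several; and the closed sets of $\hat{\mathcal{T}}_{chr}$ are defined directly from $\hat{L}_{chr}$, so agreement of the two limit operators amounts to agreement of the two topologies. As a preparatory observation I would record the easy inclusion $\hat{L}_{H}(\sigma)\subseteq \hat{L}_{chr}(\sigma)$: if $P=LI(\sigma)=LS(\sigma)$, then $P\subseteq LI(\sigma)$ is immediate and $P$ is vacuously maximal in $LS(\sigma)=P$. The whole theorem is thus about the reverse inclusion and, equivalently, about uniqueness of $\hat{L}_{chr}$-limits.

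For the implication ($\Leftarrow$) I would first show that each chronological limit is a topological limit in $\hat{\mathcal{T}}_{chr}$: if $P\in \hat{L}_{chr}(\sigma)$ and $C$ is a closed set with $P\notin C$, then for any subsequence $\sigma'\subseteq \sigma$ one still has $P\in \hat{L}_{chr}(\sigma')$, because $LI(\sigma)\subseteq LI(\sigma')$ keeps $P\subseteq LI(\sigma')$ and $LS(\sigma')\subseteq LS(\sigma)$ keeps $P$ maximal in $LS(\sigma')$; hence $\sigma$ cannot have infinitely many terms in $C$, and $\sigma\to P$ in $\hat{\mathcal{T}}_{chr}$. If the topology is Hausdorff, topological limits are unique, so $\#\hat{L}_{chr}(\sigma)\le 1$. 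To upgrade uniqueness to $P=LI(\sigma)=LS(\sigma)$, I would argue by contradiction: if $P\subsetneq LS(\sigma)$, pick $q\in LS(\sigma)\setminus P$, so $I^{-}(q)\subseteq LS(\sigma)$, and enlarge $I^{-}(q)$ to a past set $P'\subseteq LS(\sigma)$ which is maximal among IPs in $LS(\sigma)$ (Zorn/chain argument on the family of IPs contained in $LS(\sigma)$, using that the union of a chain of IPs is again an IP). Passing to a subsequence along which $q$ is eventually in $P_n$, one arranges $P'\subseteq LI$ along that subsequence while keeping $P'$ maximal in $LS$, so $P'\in\hat{L}_{chr}$ and $P'\neq P$, contradicting uniqueness. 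An analogous argument rules out $P\subsetneq LI(\sigma)$; in fact if a strictly larger past IP is available inside $LI\subseteq LS$ it contradicts maximality of $P$ in $LS$ directly, so the $LI$-step is subsumed.

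For ($\Rightarrow$) the argument is short: if $\hat{L}_{chr}=\hat{L}_{H}$, then by definition the closed sets of $\hat{\mathcal{T}}_{chr}$ coincide with those of $\hat{\mathcal{T}}_{c}$, so the two topologies are literally the same; by Theorem \ref{teo:clt}(iv) the CLT is metrizable and therefore Hausdorff, which transfers to $\hat{\mathcal{T}}_{chr}$.

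The main obstacle is clearly the constructive step in the ($\Leftarrow$) direction: producing a \emph{second} IP in $LS(\sigma)$ that still belongs to $\hat{L}_{chr}(\sigma)$. The bookkeeping between $LI$ and $LS$ along subsequences, the use of the indecomposable/decomposable structure of past sets to guarantee that the maximal enlargement of $I^{-}(q)$ is genuinely an IP, and the verification that this new IP is distinct from $P$ are where the real work lies; everything else follows from bookkeeping and from the already-cited results about $\hat{\mathcal{T}}_{c}$.
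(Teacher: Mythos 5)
First, note that the paper does not actually prove this theorem: it is quoted from \cite[Thm.~5.3]{costaflores:newtopology}, so there is no in-paper argument to compare against. Judged on its own terms, your outline has the right architecture: the inclusion $\hat{L}_{H}(\sigma)\subseteq\hat{L}_{chr}(\sigma)$, the stability of $\hat{L}_{chr}$ under passage to subsequences (via $LI(\sigma)\subseteq LI(\sigma')$ and $LS(\sigma')\subseteq LS(\sigma)$), the deduction that chronological limits are $\hat{\mathcal{T}}_{chr}$-limits and hence unique under Hausdorffness, and the ($\Rightarrow$) direction via identity of the two topologies plus Theorem~\ref{teo:clt}(iv) (which does presuppose global hyperbolicity; that hypothesis is implicit in the section but absent from the statement as written).

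There is, however, a genuine gap at precisely the point you flag as ``where the real work lies,'' and flagging it is not the same as closing it. Having picked $q\in LS(\sigma)\setminus P$ and a maximal IP $P'\supseteq I^{-}(q)$ inside $LS(\sigma)$, you pass to the subsequence $\sigma'$ of those $P_n$ containing $q$ and assert that this ``arranges $P'\subseteq LI$.'' What that extraction actually gives is only $I^{-}(q)\subseteq\bigcap\sigma'\subseteq LI(\sigma')$; the maximal enlargement $P'$ may be much larger than $I^{-}(q)$, and nothing forces $P'\subseteq LI(\sigma')$, so the defining condition $P'\subseteq LI$ of $\hat{L}_{chr}$ is not verified and no second chronological limit has been produced. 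The step can be repaired by a diagonal extraction along a generating chain of $P'$: write $P'=I^{-}(\{r_j\})$ with $r_1\ll r_2\ll\cdots$; since each $r_j\in P'\subseteq LS(\sigma)$ lies in infinitely many $P_n$, and since $r_i\ll r_j$ for $i<j$ forces $r_i\in P_n$ whenever $r_j\in P_n$ (the $P_n$ are past sets), the index sets $\{n: r_j\in P_n\}$ are nested and infinite, so one can choose $n_1<n_2<\cdots$ with $r_1,\dots,r_j\in P_{n_j}$. Along $\sigma''=\{P_{n_j}\}$ every $r_j$ is eventually in all terms, whence $P'\subseteq LI(\sigma'')$, while $P'$ remains maximal in $LS(\sigma'')\subseteq LS(\sigma)$ and $P$ remains in $\hat{L}_{chr}(\sigma'')$ by subsequence stability; then $P\neq P'$ are two $\hat{\mathcal{T}}_{chr}$-limits of $\sigma''$, contradicting Hausdorffness. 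With this insertion (and a one-line verification, via the Geroch--Kronheimer--Penrose characterization of IPs as directed past sets, that unions of chains of IPs are IPs, so Zorn applies) your proof is complete.
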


Recall $\hat{M}$ is endowed with the chronological relation $\hat{\ll}$ given by \eqref{eq2}. Thus, $\hat{\ll}$ naturally induces an associated causal relation ${\leq}_\circ$ in $\hat{M}$  (see \cite[Definition 2.22]{sanchezminguzzi}) by setting 
\begin{equation*}
P {\leq}_\circ Q \qquad \Leftrightarrow \qquad \hat{I}^{-} (P) \subset \hat{I}^{-} (Q) \quad \text{and} \quad \hat{I}^{+} (Q) \subset \hat{I}^{+} (P).
\end{equation*}
By construction, this relation is reflexive, transitive and contains $\hat{\ll}$, thus it is a causal relation according to the definition of a pre-length space.

The causal relation $\leq_\circ$ though constructed by a standard procedure, it may seem artificial at first sight. The simple contention of past sets provides right away a pre-order in $\hat{M}$. That is,  the relation $\hat{\leq}$ on $\hat{M}$ given by
$$P \hat{\leq} Q \Leftrightarrow P \subset Q.$$
might be considered as well \cite{outline}.

The following well known result will be used extensively from now on. We include its proof here for completeness (see for example \cite[Proposition 2.11]{costaflores:newtopology})

\begin{proposition}\label{lemadesigualdad}
Let $(M,g)$ is a globally hyperbolic spacetime. If $P \in \hat{\partial} M$ then for any $p\in M$ we have $P\hat{\nleq} I^-(p)$.  \footnote{Equivalently, if $P,Q \in \hat{M}$ are IPs such that $P \in \hat{\partial} M$ and $P \hat{\leq} Q$, then $Q \in \hat{\partial} M$. }
\end{proposition}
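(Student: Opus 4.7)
The plan is to argue by contradiction. Suppose $P \in \hat{\partial}M$ is a TIP and that, against the claim, $P \subset I^-(p)$ for some $p \in M$; the goal is to show that the chain generating $P$ must actually converge in $M$, contradicting the inextensibility required of a TIP. Write $P = I^-(\{x_n\})$ for an inextensible future-directed chain $x_n \ll x_{n+1}$. Since $x_n \ll x_{n+1}$ we have $x_n \in P$ for every $n$, and the assumption $P \subset I^-(p)$ then yields $x_n \ll p$ for all $n$. Combined with $x_1 \leq x_n$, the whole sequence sits inside the causal diamond $K := J^+(x_1) \cap J^-(p)$, which is compact by global hyperbolicity.

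Next, I would join consecutive points $x_n$ and $x_{n+1}$ by a future-directed timelike segment (which exists because $x_n \ll x_{n+1}$) and concatenate these segments into a single future-directed timelike curve $\gamma:[0,T)\to M$ passing through every $x_n$. Each intermediate segment from $x_n$ to $x_{n+1}$ lies in $I^+(x_n)\cap I^-(x_{n+1})$, which by monotonicity of the chain is contained in $I^+(x_1)\cap I^-(p) \subset K$; hence the whole curve $\gamma$ is imprisoned in the compact set $K$. Invoking the standard causality result that in a strongly causal spacetime (and global hyperbolicity entails strong causality) no future-inextendible causal curve can be imprisoned in a compact set, $\gamma$ must admit a future endpoint $q \in K$. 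But then $x_n\to q$ in the manifold topology of $M$, contradicting the inextensibility of the generating chain.

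I expect the main obstacle to be a clean justification that the concatenated curve $\gamma$ is genuinely confined to $K$ and that its existence of a future endpoint forces the discrete sequence $\{x_n\}$ (rather than just a subsequence) to converge; both rest on the transitive structure of the chronological relations $I^+(x_n)\subset I^+(x_1)$ and $I^-(x_{n+1})\subset I^-(p)$, and on the fact that once $\gamma$ is extended past its parameter supremum, every tail of $\{x_n\}$ must accumulate at the same endpoint. Alternatively, one can bypass the curve construction by extracting a convergent subsequence $x_{n_k}\to q$ from the compact $K$ and using the closedness of $\leq$ under global hyperbolicity to promote subsequential convergence to full convergence; but the imprisonment argument seems to me the most direct route.
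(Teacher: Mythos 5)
Your argument is correct and follows essentially the same route as the paper: both reduce to showing that a future-inextendible timelike curve generating $P$ would be imprisoned in the compact diamond $J^{+}(\cdot)\cap J^{-}(p)$, contradicting strong causality. The only cosmetic difference is that the paper starts directly from an inextensible timelike curve $\gamma$ with $P=I^{-}(\gamma)$, whereas you reconstruct such a curve by concatenating timelike segments of the generating chain.
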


\begin{proof}
Suppose by contradiction that $P \subset I^{-}(p)$ and consider $\gamma:[a,b) \rightarrow M$ an inextensible future directed timelike curve such that $P=I^{-}(\gamma)$. Since $P$ is a TIP 
then we must have that $P$ is a proper subset of $I^{-}(p)$, thus, if we take $\gamma(a)$ then we must have that $\gamma(s) \in J^{+}(\gamma(a)) \cap I^{-}(p) \subset J^{+}(\gamma(a)) \cap J^{-}(p)$ 
for all $s \in (a,b)$ and this gives an inextensible timelike curve imprisoned in a compact subset thus contradicting strong causality. 
\end{proof}

We now show that the causal relations $\leq_\circ$ and $\hat{\leq}$ coincide when $(M,g)$ is globally hyperbolic.

\begin{proposition}\label{relacionescausalescoinciden}
Let $(M,g)$ be a globally hyperbolic spacetime. For all $P,Q \in \hat{M}$, $P\hat{\leq} Q $ if and only if $P {\leq}_\circ Q$.
\end{proposition}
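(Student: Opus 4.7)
The plan is to treat the two implications separately. The direction $P \hat{\leq} Q \Rightarrow P \leq_\circ Q$ should be essentially formal, amounting to a chase through the definitions of $\hat{\ll}$ and the induced chronological past/future sets, and does not even use global hyperbolicity. The converse requires genuine structural information about past sets in $M$ and is where global hyperbolicity enters.

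For the easy implication, I would assume $P \subset Q$ and argue as follows. If $R \hat{\ll} P$, \eqref{eq2} produces $p \in P \setminus R$ with $R \subset I^-(p)$; the same $p$ then lies in $Q \setminus R$ and witnesses $R \hat{\ll} Q$, giving $\hat{I}^-(P) \subset \hat{I}^-(Q)$. Dually, if $Q \hat{\ll} R$ with witness $r \in R \setminus Q$ satisfying $Q \subset I^-(r)$, then $r \notin P$ (for otherwise $r \in P \subset Q$, contradicting $r \in R \setminus Q$), so $P \subset Q \subset I^-(r)$ yields $P \hat{\ll} R$ and hence $\hat{I}^+(Q) \subset \hat{I}^+(P)$. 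Combined, $P \leq_\circ Q$.

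For the forward implication, my strategy is to pick any $p \in P$ and show $p \in Q$, probing $P$ with the PIP $I^-(p)$. Since $P$ is a past set, there is $p' \in P$ with $p \ll p'$; by chronology (which global hyperbolicity guarantees) $p' \notin I^-(p)$, and $I^-(p) \subset I^-(p')$ follows from transitivity of $\ll$. Thus $p'$ witnesses $I^-(p) \hat{\ll} P$ via \eqref{eq2}, so $I^-(p) \in \hat{I}^-(P) \subset \hat{I}^-(Q)$, producing some $r \in Q$ with $I^-(p) \subset I^-(r)$. A standard sequence argument --- pick $p_n \ll p$ with $p_n \to p$, observe $p_n \ll r$, and conclude $p \in \overline{I^-(r)} = J^-(r)$ --- upgrades this to $p \leq r$. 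Finally, since $Q$ is a past set, some $s \in Q$ satisfies $r \ll s$; push-up on $p \leq r \ll s$ gives $p \ll s$, hence $p \in I^-(Q) = Q$. This establishes $P \subset Q$.

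I expect the main obstacle to be precisely the passage from $I^-(p) \subset I^-(r)$ to $p \leq r$: in a strongly causal but not causally simple spacetime one could have $p \in \overline{I^-(r)} \setminus J^-(r)$, and the argument would break. The identity $\overline{I^-(r)} = J^-(r)$, i.e.\ the closedness of causal past sets, is exactly where the global hyperbolicity hypothesis is used; the remainder of the proof is bookkeeping with the definitions of past sets, PIPs, $\hat{\ll}$, and the push-up property.
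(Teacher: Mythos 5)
Your proof is correct and follows essentially the same route as the paper's: the backward implication is the same definition chase, and the forward implication probes $P$ with $I^-(p)$, passes from $I^-(p)\subset I^-(r)$ to $p\leq r$ via $\overline{I^-(r)}=J^-(r)$ (causal simplicity), and finishes with push-up into $Q$. The only difference is cosmetic: you handle the inclusion $\hat{I}^+(Q)\subset\hat{I}^+(P)$ uniformly, whereas the paper needlessly splits into the cases $Q$ a TIP (invoking Proposition~\ref{lemadesigualdad}) and $Q$ a PIP.
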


\begin{proof}
Suppose first that $P \hat{\leq} Q$, i.e. $P \subset Q$. If $Q$ is a TIP then by Proposition \ref{lemadesigualdad} $\hat{I}^{+} (Q) = \emptyset$ and we have $\hat{I}^{+} (Q) \subset \hat{I}^{+} (P)$. If $Q = I^{-} (q)$ is a PIP, consider $R  \in \hat{I}^{+} (Q)$, then $Q \hat{\ll} R$, which means there exists $r \in R \setminus Q$ such that $Q \subset I^{-} (r)$ and since $ P \subset Q \subset I^{-} (r)$ with $r \in R \setminus P$ we have $R \in \hat{I}^{+} (P)$. Thus $\hat{I}^{+} (Q) \subset \hat{I}^{+} (P)$.

Similarly, if $R \in \hat{I}^{-} (P)$ then $R \hat{\ll} P$, and there exists $r \in P \setminus R$ with $R \subset I^{-}(p)$. Since $P \subset Q$, $p$ also belongs to $Q$ and we get $R \hat{\ll} Q$, that is, $R \in \hat{I}^{-} (Q)$. This means that $\hat{I}^{-} (P) \subset \hat{I}^{-} (Q)$ and we have $P {\leq}_\circ Q$. 

On the other hand, let us suppose $P {\leq}_\circ Q$, that is,
$$\hat{I}^{+} (Q) \subset \hat{I}^{+} (P) \qquad \text{ and } \qquad \hat{I}^{-} (P) \subset \hat{I}^{-} (Q).$$
Take $p_0 \in P$, then there exists $q_0 \in P \setminus I^{-} (p_0)$ with $I^{-} (p_0) \subset I^{-} (q_0)$ therefore $I^{-} (p_0) \hat{\ll} P$, that is, $I^{-} (p_0) \in \hat{I}^{-}(P) \subset \hat{I}^{-}(Q)$. Thus, there exists $q \in Q\setminus I^{-}(p_0)$ con $I^{-} (p_0) \subset I^{-} (q) \subset Q$. If $\{p_n\}$ is a future directed chain generating  $I^{-}(p_0)$ we have that $p_n \in I^{-}(q)$ for $n$ large enough and then $p_0 \in \overline{I^{-}(q)} = J^{-}(q)$, since $M$ is causally simple. In consequence, $p_0 \leq q \ll q_m,$ for some $m$ large enough, where $\{q_m\}$ is a future-directed chain that generates $Q = I^{-}(\{q_m\})$. Then, $p_0 \in Q$ and $P \subset Q$, which by definition means $P \hat{\leq} Q$. 
\end{proof}

As an immediate consequence, we have the following

\begin{coro}\label{hatMescausal}
Let $(M,g)$ be a globally hyperbolic spacetime. Then the relation $\hat{\leq}$ is a partial order on $\hat{M}$.\footnote{This amounts to saying that the causal space $(\hat{M},\hat{\ll},\hat{\leq})$ satisfies the causality axiom. }
\end{coro}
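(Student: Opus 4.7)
The plan is to verify the three defining properties of a partial order for $\hat{\leq}$, leveraging the fact that $\hat{\leq}$ is defined literally as set inclusion between past sets. Reflexivity ($P \subset P$) and transitivity ($P \subset Q \subset R \Rightarrow P \subset R$) are immediate from the fact that set inclusion is reflexive and transitive on any collection of subsets, with no spacetime hypothesis required. The only axiom that needs a moment's thought is antisymmetry: if $P \hat{\leq} Q$ and $Q \hat{\leq} P$ then $P \subset Q$ and $Q \subset P$, hence $P = Q$, which again is just antisymmetry of set inclusion.

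So, strictly speaking, the three partial-order properties for $\hat{\leq}$ hold abstractly, without invoking global hyperbolicity. The role of Proposition \ref{relacionescausalescoinciden} (and hence of the globally hyperbolic hypothesis) is to make this statement meaningful as a statement about causality: it identifies $\hat{\leq}$ with the canonically defined causal relation $\leq_\circ$ of Sánchez--Minguzzi associated to $\hat{\ll}$. In particular, this gives at once that $\leq_\circ$ is antisymmetric, a property which is not automatic from its definition. Combined with the transitivity of $\hat{\ll}$ verified around equation \eqref{eq2} and the obvious containment $\hat{\ll} \subset \hat{\leq}$ (since $P \hat{\ll} Q$ requires a witness $q \in Q$ with $P \subset I^-(q) \subset Q$, using that $Q$ is a past set), one obtains the full causality axiom for the causal space $(\hat{M}, \hat{\ll}, \hat{\leq})$ mentioned in the footnote.

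There is no substantive obstacle here: the corollary is a direct packaging of Proposition \ref{relacionescausalescoinciden}, where the actual work was done. The only choice is whether to present antisymmetry as a property of set inclusion (shortest route) or to derive it for $\leq_\circ$ first and then transport it via the proposition; the former is cleaner and is what I would write.
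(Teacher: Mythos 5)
Your proof is correct and matches the paper's (unwritten) argument: the paper labels this an immediate consequence of Proposition \ref{relacionescausalescoinciden}, and your reading — that reflexivity, transitivity, and antisymmetry are automatic for set inclusion, while the proposition's role is to transport these properties to the canonical causal relation $\leq_\circ$ and thereby establish the causality axiom for $(\hat{M},\hat{\ll},\hat{\leq})$ — is exactly the intended content. Nothing is missing.
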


As expected, the relations $\hat{\ll}$ and $\hat{\leq}$ naturally extend those of $M$.

\begin{proposition}\label{relacionesextendidas}
Let $(M,g)$ be globally hyperbolic. For all $p,q \in M$, $I^{-} (p) \hat{\ll} I^{-} (q)$ if and only if $p \ll q$. Moreover, $I^{-} (p) \hat{\leq} I^{-} (q)$ if and only if $p \leq q$.
\end{proposition}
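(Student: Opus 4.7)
The plan is to treat each equivalence separately, using the push-up property together with causal simplicity of globally hyperbolic spacetimes (which gives $\overline{I^-(x)} = J^-(x)$ for all $x \in M$).

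For the chronological equivalence $I^-(p)\,\hat{\ll}\,I^-(q) \Leftrightarrow p \ll q$: in the reverse direction, I would start from $p \ll q$ and use the density of chronology to pick an intermediate point $r$ with $p \ll r \ll q$. Then $r \in I^-(q)$, and $r \notin I^-(p)$ (otherwise $r \ll p \ll r$ violates chronology, which holds since $M$ is globally hyperbolic). Push-up applied to $I^-(p)$ combined with $p \ll r$ yields $I^-(p) \subset I^-(r)$, and the definition \eqref{eq2} gives $I^-(p)\,\hat{\ll}\,I^-(q)$. For the forward direction, unfolding \eqref{eq2} produces $r \in I^-(q)\setminus I^-(p)$ with $I^-(p) \subset I^-(r)$. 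Taking closures and invoking causal simplicity, $p \in \overline{I^-(p)} \subset \overline{I^-(r)} = J^-(r)$, so $p \leq r$. Combined with $r \ll q$, push-up delivers $p \ll q$.

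For the causal equivalence $I^-(p)\,\hat{\leq}\,I^-(q) \Leftrightarrow p \leq q$: the reverse direction is immediate from push-up, since any $x \in I^-(p)$ satisfies $x \ll p \leq q$, hence $x \ll q$, giving $I^-(p) \subset I^-(q)$. The forward direction again uses causal simplicity: from $I^-(p) \subset I^-(q)$ we obtain $p \in \overline{I^-(p)} \subset \overline{I^-(q)} = J^-(q)$, so $p \leq q$.

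The only subtle ingredient is the use of causal simplicity of globally hyperbolic spacetimes to pass from the set inclusion $I^-(p) \subset I^-(r)$ to the pointwise relation $p \leq r$; without this identification of $\overline{I^-(r)}$ with $J^-(r)$ one cannot conclude directly. Everything else is a straightforward application of push-up and the definition of $\hat{\ll}$. No structural obstacle is expected; the proof is a short verification, and in fact the same argument is essentially already performed (in one direction) inside the proof of Proposition \ref{relacionescausalescoinciden}.
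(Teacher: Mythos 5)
Your proof is correct and follows essentially the same route as the paper's: density of the chronology plus push-up for the reverse implications, and causal simplicity ($\overline{I^-(x)}=J^-(x)$) to pass from set inclusions to pointwise causal relations in the forward implications. In fact your treatment of the forward chronological direction is slightly more careful than the paper's, which asserts $p\ll r$ where only $p\leq r$ follows from $I^-(p)\subset I^-(r)$; your derivation of $p\leq r\ll q$ followed by push-up is the cleaner way to close that step.
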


\begin{proof}
Assume first that $I^{-} (p) \hat{\ll} I^{-}(q)$. This implies, by definition, that there is a $r \in I^{-}(q) \setminus I^{-} (p)$ such that $I^{-} (p) \subset I^{-} (r)$. Then $p \ll r \ll q$ which implies $p \ll q$.

Conversely, if $p \ll q$ then $p \neq q$ because $(M,g)$ is distinguishing. Consider $r \in M$ with $p \ll r \ll q$. Thus, $r \in I^{-}(q) \setminus I^{-} (p)$ with $I^{-} (p) \subset I^{-} (r)$, that is, $I^{-} (p) \hat{\ll} I^{-} (q)$. This concludes the first part of the proof.

Now let us suppose that $I^{-} (p) \hat{\leq} I^{-} (q)$, that is, $I^{-} (p) \subset I^{-} (q)$. For $p \in M$, we take $\{p_n\}$ a future-directed chain that generates $I^{-} (p)$, that is, a sequence with $p_n \ll p_{n+1}$ for all $n \in \mathbb{N}$ and $p_n \to p$. Notice that for all $n \in \mb{N}$ we have that $p_n \ll p$, that is, $p_n \in I^{-} (p)$ and so $p_n \in I^{-}(q)$. Therefore $p_n \to p \in \overline{I^{-}(q)} = J^{-}(q)$ because $(M,g)$ is causally simple. Therefore, $p \leq q$.

On the other hand, assume $p \leq q$. It suffices to prove that $I^{-} (p) \subset I^{-} (q)$. If $x \ll p$, by the push-up property we get $x \ll q$. Therefore, $I^{-} (p) \subset I^{-}(q)$. 
\end{proof}

So far, we have exhibited a metric topology, as well as chronological and causal relations in the future causal completion $\hat{M}$. In the remaining of this section we will be dealing with the construction of an adequate time separation $\hat{\tau}$  for $\hat{M}$. As a first step, notice that if $Q \in \hat{M}$ is generated by a future-directed chain $\{q_n\}$  and $p \in Q$, then there exists $N \in \mathbb{N}$ such that for all $n\geq N$, $p \ll q_n \ll q_{n+1}$. By the reverse triangle inequality of $\tau$ we have
$$\tau (p, q_{n+1}) \geq \tau (p,q_n) + \tau (q_n, q_{n+1}) > \tau (p,q_n).$$
Hence, the sequence of real numbers $\{\tau (p,q_n) : p \ll q_n\}$ is strictly increasing. Consequently, in view of Proposition \ref{lemadesigualdad} we define $\htau$ as follows: for $p \in M$, $P,Q \in \hat{M}$
\begin{itemize}
\item[(i)] $\htau (P,Q) := 0$ if $P \in \hat{\partial} M$,
\item[(ii)] $\htau (I^{-} (p), Q) := \sup \{\tau (p,q_n): Q = I^{-} (\{ q_n \})\}.$
\end{itemize}

Since the chain $\{q_n\}$ that generates a past set $Q$ might not be unique, we first have to show that $\htau$ is well defined. First note that if $\htau (P,Q) = 0$ there is nothing to prove.


Alternatively, If $\{q_n\}$ and $\{r_m\}$ are two different future-directed chains generating $Q$. It is enough to show that
$$\sup \{\tau(p,q_n) : Q = I^{-} (\{q_n\})\} = \sup \{\tau(p,r_m) : Q = I^{-} (\{r_m\})\}.$$
For any $p \in Q$ there exists $q_n \in Q$ with $p \ll q_n$, and hence there exists $r_m \in Q$ with $q_n \ll r_m$. Thus $p \ll q_n \ll r_m$ and by the reverse triangle inequality
$$\sup \{ \tau (p,r_m) \} \geq \tau (p,r_m) \geq \tau (p,q_n) + \tau (q_n, r_m) > \tau (p,q_n).$$
Hence $\sup \{ \tau (p,r_m) \}$ is an upper bound for $\{\tau (p, q_n)\}$ and thus 
$$\sup \{\tau(p,q_n) : Q = I^{-} (\{q_n\})\} \leq \sup \{\tau(p,r_m) : Q = I^{-} (\{r_m\})\}.$$
Similarly,
$$\sup \{\tau(p,r_m) : Q = I^{-} (\{r_m\})\} \leq \sup \{\tau(p,q_n) : Q = I^{-} (\{q_n\})\}.$$

We now notice that $\htau$ extends $\tau$ to the future causal completion $\hat{M}$.

\begin{proposition}\label{htauextiende}
If $P=I^-(p)$ and $Q=I^-(q)$, then $\htau (P,Q) = \tau (p,q)$.
\end{proposition}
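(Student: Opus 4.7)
The plan is to establish the equality by proving the two inequalities $\htau(P,Q) \le \tau(p,q)$ and $\htau(P,Q) \ge \tau(p,q)$ separately.

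For the upper bound, I would fix an arbitrary future-directed chain $\{q_n\}$ generating $Q = I^{-}(q)$. Since $q_n \in I^{-}(q)$, one has $q_n \ll q$; so whenever $p \ll q_n$ also holds, $p \ll q_n \ll q$ together with the reverse triangle inequality for the spacetime time separation yields
\[
\tau(p,q) \ge \tau(p,q_n) + \tau(q_n, q) \ge \tau(p, q_n).
\]
When $p \not\ll q_n$ we simply have $\tau(p, q_n) = 0 \le \tau(p, q)$. Taking the supremum over $n$ (and, by the well-definedness argument preceding the statement, over any generating chain) delivers $\htau(P, Q) \le \tau(p, q)$.

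For the lower bound, if $\tau(p, q) = 0$ there is nothing to prove. Otherwise $p \ll q$, and I would exhibit a \emph{specific} generating chain that converges to $q$ in the manifold topology of $M$: pick a past-directed timelike curve emanating from $q$ and evaluate it at a strictly decreasing sequence of parameters accumulating at $q$. This produces a future-directed chain $\{q_n\}$ with $q_n \to q$ in $M$ and $I^{-}(\{q_n\}) = I^{-}(q) = Q$ (the latter following from openness of $I^{\pm}$ and the push-up property). Lower semi-continuity of the spacetime time separation $\tau$ on $M \times M$ then gives
\[
\tau(p, q) \le \liminf_{n \to \infty} \tau(p, q_n) \le \sup_{n} \tau(p, q_n) \le \htau(P, Q).
\]

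The main delicate point is producing a generating chain that actually converges to $q$ in the manifold topology, as opposed to an arbitrary chain that only reproduces $I^{-}(q)$ as a union of pasts; convergence in $M$ is precisely what is needed to invoke lower semi-continuity of $\tau$. A secondary subtlety in the upper-bound step is the case $p \not\ll q_n$ for some chain elements, which is handled trivially by the axiom $\tau(p, q_n) = 0 \iff p \not\ll q_n$, so it contributes nothing to the supremum.
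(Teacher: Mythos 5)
Your proof is correct, but it is organized differently from the paper's. The paper argues in one stroke: it notes that \emph{any} future chain generating the PIP $I^{-}(q)$ must converge to $q$ (a fact that rests on strong causality, via the agreement of the Alexandrov and manifold topologies), observes that $n\mapsto\tau(p,q_n)$ is non-decreasing, and then invokes full continuity of $\tau$ --- available because $(M,g)$ is globally hyperbolic --- to identify $\sup_n\tau(p,q_n)=\lim_n\tau(p,q_n)=\tau(p,q)$. You instead split the claim into two inequalities: the upper bound $\htau(P,Q)\le\tau(p,q)$ follows for an \emph{arbitrary} generating chain purely from $q_n\ll q$ and the reverse triangle inequality (no continuity needed), while the lower bound is obtained by constructing one \emph{specific} chain converging to $q$ in the manifold topology and applying only lower semi-continuity of $\tau$ together with the already-established chain-independence of $\htau$. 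What your route buys is economy of hypotheses: it sidesteps both the unproved assertion that every generating chain converges to $q$ and the use of continuity of $\tau$, so the argument would survive in settings where $\tau$ is merely lower semi-continuous. What the paper's route buys is brevity, at the cost of leaning on global hyperbolicity and on the convergence claim. One small point worth making explicit in your write-up is the verification that your constructed sequence $\{q_n\}$ indeed satisfies $I^{-}(\{q_n\})=I^{-}(q)$; the inclusion $I^{-}(q)\subset\bigcup_n I^{-}(q_n)$ uses that $I^{+}(x)$ is an open neighborhood of $q$ for each $x\ll q$, exactly as you indicate, so this is a matter of spelling it out rather than a gap.
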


\begin{proof}
By definition
$$\htau (P,Q) = \sup \{\tau (p,q_n) : Q = I^{-} (\{q_n\})\}.$$
Since $Q = I^{-} (q)$ is a PIP, the chain $\{q_n\}$ must converge to $q$. Moreover, the sequence $\{\tau (p,q_n)\}$ is non decreasing and since $(M,g)$ is globally hyperbolic, $\tau$ is a continuous function. Therefore
$$\htau (P,Q) = \sup \{ \tau (p,q_n): Q = I^{-}(\{q_n\}) \} = \lim_{n \to \infty} \tau (p,q_n) = \tau (p,q).$$
\end{proof}

The following series of lemmas are intended to show that $\htau$ defines a time separation function on $(\hat{M},\hat{\mathcal{T}}_c)$ when $(M,g)$ is a globally hyperbolic spacetime.

\begin{lemma}[Positivity]\label{clasificacausalidad}
If $P,Q \in \hat{M}$, then $\htau (P,Q) > 0$ if and only if $P \hat{\ll} Q$.
\end{lemma}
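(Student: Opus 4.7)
Observe first that the case where $P$ is a TIP is essentially trivial in both directions. By definition $\htau(P,Q)=0$ when $P\in\hat\partial M$, and by Proposition \ref{lemadesigualdad} a TIP cannot satisfy $P\subset I^-(q)$ for any $q\in M$, so $P\,\hat\ll\, Q$ is outright impossible when $P$ is a TIP (since such a $q$ would have to lie in $Q\subset M$). This reduces the lemma to the PIP case, $P=I^-(p)$, where by definition $\htau(P,Q)=\sup_n \tau(p,q_n)$ for any future-directed chain $\{q_n\}$ generating $Q$.

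For the $(\Rightarrow)$ direction, assume $\htau(I^-(p),Q)>0$. Then some term $\tau(p,q_n)$ is positive, i.e.\ $p\ll q_n$ in $M$ for some $q_n$ belonging to a generating chain of $Q$. Note $q_n\in Q$ (each $q_n$ lies in $I^-(q_{n+1})\subset Q$) and $q_n\notin I^-(p)$, since otherwise $p\ll q_n\ll p$ would violate chronology. Moreover $p\ll q_n$ gives $I^-(p)\subset I^-(q_n)$ by transitivity of $\ll$. Hence $q_n$ witnesses $P\,\hat\ll\, Q$.

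For the $(\Leftarrow)$ direction, assume $P\,\hat\ll\, Q$, with witness $q\in Q\setminus P$ satisfying $I^-(p)\subset I^-(q)$. The goal is to produce some $q_n$ in a generating chain of $Q$ with $p\ll q_n$. The subtlety is that $p$ is only an accumulation point of $I^-(p)$, so the set inclusion $I^-(p)\subset I^-(q)$ does not give $p\ll q$ directly. I would choose a chain $\{p_k\}\subset I^-(p)$ with $p_k\to p$ and $p_k\ll p_{k+1}\ll p$; each $p_k\ll q$, and because $(M,g)$ is globally hyperbolic (hence causally simple) we get $p\in\overline{I^-(q)}=J^-(q)$, i.e.\ $p\leq q$. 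Since $Q$ is a past set and $q\in Q$, any generating chain $\{q_n\}$ of $Q$ must eventually satisfy $q\ll q_n$ (immediate if $Q$ is a TIP from $Q=\bigcup_n I^-(q_n)$, and if $Q=I^-(q_0)$ is a PIP, from the fact that $q_n\to q_0$ forces $q_n\in I^+(q)$ for large $n$ once $q\ll q_0$). The push-up property applied to $p\leq q\ll q_n$ then yields $p\ll q_n$, hence $\tau(p,q_n)>0$ and therefore $\htau(P,Q)>0$.

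The main obstacle is precisely the last step: upgrading the set-theoretic inclusion $I^-(p)\subset I^-(q)$ into a genuine chronological relation that the time separation $\tau$ can detect. This requires two ingredients, both available because $(M,g)$ is globally hyperbolic: causal simplicity (to close $I^-(q)$ up to $J^-(q)$ and conclude $p\leq q$) and push-up (to reactivate $\ll$ after landing merely in $J^-$). Everything else is bookkeeping on chains and past sets.
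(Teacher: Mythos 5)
Your proof is correct and follows essentially the same route as the paper's: reduce to the PIP case via Proposition \ref{lemadesigualdad}, read off a chronologically related chain element when $\htau>0$, and in the converse upgrade $I^-(p)\subset I^-(q)$ to $p\leq q$ via causal simplicity before relating $p$ to the generating chain of $Q$. The only (immaterial) difference is at the final step, where you invoke push-up on $p\leq q\ll q_n$ to get $p\ll q_n$, whereas the paper applies the reverse triangle inequality $\tau(p,q_n)\geq\tau(p,q)+\tau(q,q_n)>0$ directly; you also make explicit the small point that $q_n\notin I^-(p)$, which the paper leaves implicit.
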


\begin{proof}
We begin assuming that $P \hat{\ll} Q$. Then $P = I^{-}(p)$ for some $p\in M$ by Proposition \ref{lemadesigualdad}.  By definition, there exists $q \in Q\setminus I^{-}(p)$ such that $I^{-}(p) \subset I^{-} (q)$, then $p \leq q$. Moreover, if $\{q_n\}$ is a future directed chain generating $Q$ then there exists $q_m \in \{q_n\}$ such that $p \leq q \ll q_n$ for all $n\ge m$ which implies  by the reverse triangle inequality
$$\tau (p,q_n) \geq \tau (p,q) + \tau(q,q_n) > 0.$$
Therefore
$$\hat{\tau}(I^{-}(p), Q ) = \sup \{\tau(p,q_n) :  Q = I^{-} (\{q_n\})\} > 0.$$

Conversely, assume that $\htau (P,Q) > 0$. Which by the definition of $\htau$ means that $P$ is a PIP. Suppose that $P = I^{-}(p)$. If $\htau (I^{-}(p) , Q) < +\infty$, for any $\epsilon > 0$ there exists $q_n$ with
$$\tau (p,q_n) > \hat{\tau}(I^{-}(p),Q) - \epsilon.$$
If we take $\epsilon = \frac{\hat{\tau}(I^{-}(p),Q)}{2}$, we have that $\tau(p,q_n) > 0$ which implies $p \ll q_n \ll q_{n+1}$. Hence, $q_n \in Q \setminus I^{-}(p)$ with $I^{-}(p) \subset I^{-}(q_n)$ and then $I^{-}(p) \hat{\ll} Q$. Now, if $\htau(I^{-} (p), Q) = +\infty$, for any $N \in \mathbb{N}$ there exists $q_m \in \{q_n\}$ such that
$$\tau (p,q_m) \geq N > 0,$$
which implies $p \ll q_n$ for all $n\geq m$. Then $I^{-}(p) \hat{\ll} Q$.
\end{proof}

The following result is needed in the proof of the reverse triangle inequality.

\begin{lemma}\label{lema:rev}
If $p \notin Q$ or $p \in \partial Q$, then $\htau (I^{-} (p), Q) = 0$.
\end{lemma}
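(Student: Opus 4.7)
The plan is to exploit two already-established structural facts: every IP is an open subset of $M$ (being a union of chronological pasts of points), and in the globally hyperbolic spacetime $(M,g)$ one has $\tau(x,y)>0$ if and only if $x \ll y$. Combined with the definition of $\hat{\tau}$ as a supremum along any future-directed chain generating $Q$, these should reduce the lemma to the statement that no element of any generating chain lies in the chronological future of $p$, so that every term in the defining supremum vanishes.

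My first step is to collapse the two hypothesis cases into one. For any generating chain $\{q_n\}$ one can write $Q = I^-(\{q_n\}) = \bigcup_n I^-(q_n)$, which exhibits $Q$ as a union of open chronological pasts; hence $Q$ is open in $M$. Consequently $\partial Q \cap Q = \emptyset$, so the hypothesis $p \in \partial Q$ already forces $p \notin Q$. It therefore suffices to treat the single case $p \notin Q$.

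Next, fix any future-directed chain $\{q_n\}$ generating $Q$. Since $Q = \bigcup_n I^-(q_n)$ and $p \notin Q$, we have $p \notin I^-(q_n)$ for every $n$, i.e.\ $p \not\ll q_n$. The characterization $\tau(x,y)>0 \Leftrightarrow x \ll y$ then yields $\tau(p,q_n)=0$ for every $n$. Invoking the definition of $\hat{\tau}$ together with the chain-independence established just before the lemma,
\[
\hat{\tau}(I^-(p),Q) \;=\; \sup\{\tau(p,q_n) : Q = I^-(\{q_n\})\} \;=\; 0,
\]
which is the desired conclusion.

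I do not foresee any serious obstacle; the argument is essentially bookkeeping. The only mildly nontrivial observation is that the boundary case is automatically subsumed by the ``$p \notin Q$'' case, because every IP is open in $M$, so no separate argument for $p \in \partial Q$ is needed.
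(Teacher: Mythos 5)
Your proof is correct and follows essentially the same route as the paper's: both reduce the claim to the observation that $p\notin Q$ (equivalently $p\notin I^{-}(q_n)$ for every $n$) forces $p\not\ll q_n$ and hence $\tau(p,q_n)=0$ for every term in the defining supremum. The only cosmetic differences are that you argue directly rather than by contradiction and explicitly note that the boundary case is subsumed because $Q$ is open; the paper treats the two hypotheses together without spelling this out.
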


\begin{proof}
Let $\{q_n\}$ be a future-directed chain generating $Q \in \hat{M}$. Proceeding by contradiction, let us suppose that $\htau (I^{-} (p), Q ) > 0$ then there exists $q_k \in \{ q_n \}$ with $\tau (p,q_k) > 0$, but this only occurs if $p \ll q_k$, which is a contradiction because $p \notin Q$ or $p \in \partial Q$, and in either case, there can be no such $q_k$.
\end{proof}

\begin{lemma}[Reverse triangle inequality]\label{desigualdadtrianguloinvertida}
If $P,Q,R \in \hat{M}$ are such that $P \hat{\leq} Q \hat{\leq} R$, then 
$$\htau (P,R) \geq \htau (P,Q) + \htau (Q,R).$$
\end{lemma}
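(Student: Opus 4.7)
The plan is a case analysis on the types of $P$ and $Q$. If $P \in \hat{\partial}M$ is a TIP, the footnote to Proposition \ref{lemadesigualdad} forces $Q$ and $R$ to be TIPs as well, all three time separations vanish by definition, and the inequality reads $0 \geq 0 + 0$. So I assume henceforth that $P = I^-(p)$ is a PIP.

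Suppose first that $Q$ is a TIP with generating chain $\{q_n\}$. Then $R$ is also a TIP (footnote of Proposition \ref{lemadesigualdad}), say with chain $\{r_m\}$, and $\htau(Q,R) = 0$, so only $\htau(P,R) \geq \htau(P,Q)$ is needed. For each $n$ with $\tau(p,q_n) > 0$ one has $p \ll q_n$, and since $q_n \in Q \subset R = \bigcup_m I^-(r_m)$, there is $m$ with $q_n \ll r_m$. The reverse triangle inequality of $\tau$ on $M$ then gives $\tau(p,r_m) \geq \tau(p,q_n) + \tau(q_n,r_m) \geq \tau(p,q_n)$, whence $\htau(P,R) \geq \tau(p,q_n)$; the bound for indices $n$ with $\tau(p,q_n) = 0$ is trivial, so taking $\sup_n$ yields $\htau(P,R) \geq \htau(P,Q)$.

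Suppose instead $Q = I^-(q)$ is a PIP. Propositions \ref{relacionesextendidas} and \ref{htauextiende} give $p \leq q$ and $\htau(P,Q) = \tau(p,q)$. If $R = I^-(r)$ is also a PIP the claim reduces at once to the reverse triangle inequality for $\tau$ on $M$ applied to $p \leq q \leq r$. If $R$ is a TIP with generating chain $\{r_m\}$, I split according to whether $\htau(Q,R) > 0$. If some $\tau(q,r_{m_0}) > 0$ then $q \leq r_m$ for all $m \geq m_0$, and the chain $p \leq q \leq r_m$ gives $\tau(p,r_m) \geq \tau(p,q) + \tau(q,r_m)$; taking $\sup_m$ (the constant $\tau(p,q)$ pulling through) yields $\htau(P,R) \geq \tau(p,q) + \htau(Q,R)$. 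Otherwise $\htau(Q,R) = 0$ and only $\htau(P,R) \geq \tau(p,q)$ remains; taking a generating chain $\{q_n\}$ of $I^-(q)$ with $q_n \to q$, choosing $m_n$ with $q_n \ll r_{m_n}$ (possible because $q_n \in Q \subset R$), and applying the reverse triangle inequality on $M$ gives $\tau(p,r_{m_n}) \geq \tau(p,q_n)$; continuity of $\tau$ on $M \times M$ (from global hyperbolicity) delivers $\tau(p,q_n) \to \tau(p,q)$, and the supremum $\htau(P,R)$ dominates the limit.

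The main subtlety is that $I^-(q) \subset R$ does not by itself imply $q \leq r_m$ for large $m$ — this can fail when $q \in \partial R$, which is exactly the regime $\htau(Q,R) = 0$ — so that subcase must be closed via a limit using approximants $q_n \ll q$ rather than applying the reverse triangle inequality to $q$ directly. The rest is bookkeeping with suprema and push-up.
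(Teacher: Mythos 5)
Your proof is correct, and its skeleton matches the paper's: classify $P,Q,R$ by PIP/TIP type, dispose of the cases where $P$ is terminal via Proposition \ref{lemadesigualdad}, reduce the all-PIP case to the reverse triangle inequality of $\tau$ on $M$, and treat the remaining mixed cases through generating chains. Where you genuinely diverge is in the decomposition of those mixed cases. The paper splits the case $P=I^{-}(p)$, $Q=I^{-}(q)$, $R$ terminal into five subcases according to whether $p\ll q$ or $p\rightarrow q$ and whether $p,q$ lie in $R$ or on $\partial R$, and the case $P$ proper with $Q,R$ terminal into three more; you collapse the former into the single dichotomy $\htau(Q,R)>0$ versus $\htau(Q,R)=0$ (equivalently $q\in R$ versus $q\in\partial R$) and the latter into one uniform estimate, letting the trivial subcases absorb themselves because the term being bounded below vanishes there. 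The analytic core is identical --- your limit argument with approximants $q_n\ll q$, $q_n\to q$ and continuity of $\tau$ in the regime $q\in\partial R$ is exactly the paper's subcase (i)(b) --- but your organization makes it transparent that the horismos/chronological distinction between $p$ and $q$ is irrelevant and only the boundary membership of $q$ matters. One micro-step you leave implicit: in that limit argument, deducing $\tau(p,r_{m_n})\geq\tau(p,q_n)$ from the reverse triangle inequality requires $p\leq q_n$, which can fail when $p\rightarrow q$; but in that event $\tau(p,q_n)=0$ and the bound holds vacuously, so nothing breaks.
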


\begin{proof}
We proceed  case by case. If $P,Q,R$ are all TIPs then
$$\htau (P,R) = 0 \geq \htau(P,Q) + \htau(Q,R) = 0.$$
Moreover, if $P$ is a TIP by Proposition \ref{lemadesigualdad}, $Q$ and $R$ are TIPs as well.
Also, if $P,Q,R$ are all PIPs, that is, $P = I^{-}(p), Q= I^{-}(q)$ and $R = I^{-}(r)$ for some $p,q,r \in M$ then the reverse triangle inequality of $\tau$ gives us the result.
Thus, we have only two cases left, namely
\begin{itemize}
\item[(i)] $P \hat{\leq} Q \hat{\leq} R$ with $P=I^-(p)$, $Q=I^-(q)$, $R \in \hat{\partial} M$;
\item[(ii)] $P \hat{\leq} Q \hat{\leq} R$ with $P= I^{-} (p)$, $Q$, $R \in \hat{\partial} M$.
\end{itemize}

For case (i), notice that if $p\in\partial R$ then $q\not\in R$. Moreover, if
$p$, $q\in \partial R$, then $p\rightarrow q$\footnote{Recall that the \emph{horismos} relation $p\rightarrow q$ is defined as $p\leq q$ and $p\not\ll q$.} follows. Thus we only have the following possibilities: 
\begin{itemize}
\item[(a)] $p \rightarrow q$ and $p, q \in \partial R$, 
\item[(b)] $p \ll q$ with $q \in \partial R$,
\item[(c)] $p \ll q$ and $q \in R$,
\item[(d)] $p \rightarrow q$, $p \in R$ and $q \in \partial R$,
\item[(e)] $p \rightarrow q$ and $p,q\in R$. 
\end{itemize}

In case (a) observe that $\tau(p,q)=0$, $\hat{\tau}(I^{-}(p),R)=\hat{\tau}(I^{-}(q),R)=0$ in virtue of Lemma \ref{lema:rev}. Thus the triangle inequality holds trivially. 

Now, for case $(b)$  consider $R=I^{-}(\{r_{m}\})$. Since
 $p \in R$ and $\hat{\tau}(I^{-}(q),R)=0$,  take a future directed timelike sequence $\{q_{n}\}$ such that $q_{n} \in
I^{-}(q)$ and $q_{n} \rightarrow q$. Thus, $ p \ll q_{n} \ll r_{m_n}$ for some $n$ and $m_n$ natural numbers and this leads to
\[
\tau(p,r_{m_n}) \geq \tau(p,q_n) + \tau(q_n,r_{m_{n}}) \geq \tau(p,q_n)
\]
by the reverse triangle inequality of $\tau$. Since $(M,g)$ is globally hyperbolic, the time separation function $\tau$  is continuous, thus 
$$\hat{\tau}(I^-({p)},R) \geq \hat{\tau}(I^{-}(p),I^{-}(q))+0= \hat{\tau}(I^{-}(p),I^{-}(q))+\hat{\tau}(I^{-}(q),R).$$

For (c) it is enough to prove
$$\htau (I^{-} (p), R) \geq \tau (p,q) + \htau (I^{-}(q) , R).$$
Let $\{r_m\}$ be a future-directed chain generating $R$. By definition
\begin{align*}
\htau (I^{-} (p),R) :=& \sup \{\tau (p,r_m): R= I^{-} (\{r_m\})\} \\
\htau (I^{-} (q),R) :=& \sup \{\tau (q,r_m): R= I^{-} (\{r_m\})\}
\end{align*}
We know that $I^{-}(p) \hat{\leq} I^{-} (q)$ if and only if $p \leq q$. Therefore $p\leq q \ll r_m$ for large $m$ and  we have
$$\tau (p,r_m) \geq \tau(p,q) + \tau (q,r_m),$$
which in turn implies
$$\htau ( I^{-} (p), R) \geq \tau (p,q) + \htau (I^{-}(q), R).$$

In case $(d)$ we have that  
$\hat{\tau}(I^{-}(q),R)=0=\hat{\tau}(I^{-}(p),I^{-}(q))$ and since $p \in R$ we have that 
$p \ll r_{m}$ for some $m \in \mb{N}$ which leads to $\tau(p,r_{m})>0$ and this gives $\hat{\tau}(I^{-}(p),R)>0=\hat{\tau}(I^{-}(p),I^-(q))+\hat{\tau}(I^{-}(q),R)$.

In order to prove $(e)$ note that $\hat{\tau}(I^{-}(p),I^{-}(q))=0$ and following the same argument in point $(c)$ we have that $p \leq q \ll r_{m}$ for  large $m$ and thus $\tau(p,r_{m}) \geq \tau(q,r_{m})$ by the inverse triangle inequality in $(M,g)$. Thus, we have $\hat{\tau}(I^{-}(p),R) \geq \hat{\tau}(I^{-}(q),R)$ which is the reverse triangle inequality in this case.

Similarly, for case (ii) notice that $R\not\in\hat{I}^+(Q)$, thus we have the following possibilities:

\begin{itemize}
	\item[(a)] $p \in \partial Q \cap \partial R$, 
	\item[(b)] $p \in \partial Q$ and $p \in R$,
	\item[(c)] $p \in Q \cap R$. 
\end{itemize} 

In  case (a) observe that all the quantities involved are zero and thus the reverse triangle inequality holds trivially. 

In case $(b)$ we have that 
$\hat{\tau}(I^{-}(p),Q)=0$ and $\hat{\tau}(Q,R)=0$ and since $p \in R$ we have that for any future directed sequence $\{r_m\}$ that generates $R$ we have 
$p \ll r_{m}$ for $m \geq m_{0}$. The latter implies that $\hat{\tau}(I^{-}(p),R) \geq \tau(p,r_{m})>0$ and thus reverse triangle inequality holds trivially again. 

Finally, for (c) we have to prove that
$$\htau(I^{-} (p) , R) \geq \htau (I^{-}(p) , Q)$$
since $\htau (Q,R) = 0$, by definition. In order to verify this, take $\{q_n\}$ and $\{r_m\}$ future-directed chains generating $Q$ and $R$, respectively.  Observe that $Q \subset R$ since they are causally related in $\hat{M}$, so there are $n, m_{n} \in \mb{N}$ such that 
$p \ll q_{n} \ll r_{m_{n}}$. Therefore, we have $\tau (p,r_{m_{n}}) \geq \tau (p,q_n)$ for large $n$,
and thus
$\htau (I^{-}(p) , R) \geq \htau (I^{-}(p) , Q)$ holds.
\end{proof}

All that is left for $\htau$ to be a time separation function is for it to be lower semicontinuous. 

\begin{lemma}[Lower semicontinuity]\label{lowersemicontinua}
The function $\htau$ is lower semicontinuous in $\hat{M}$, i.e. for all $P,Q \in \hat{M}$, and any $\delta > 0$, if $P_n$ and $Q_n$ are sequences that converge to $P$ and $Q$, respectively, then there exists $N \in \mathbb{N}$ such that
$$\htau (P_n , Q_n) > \htau (P,Q) - \delta$$
for all $n\geq N$.
\end{lemma}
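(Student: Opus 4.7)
The plan is to argue by cases on the value of $\htau(P,Q)$. If $\htau(P,Q)=0$ the inequality is trivial since $\htau\geq 0$, so assume $\htau(P,Q)>0$. By Lemma \ref{clasificacausalidad} this forces $P\hat{\ll} Q$, and by the very definition of $\htau$ (which vanishes whenever its first argument lies in $\hat{\partial} M$) we obtain that $P$ is a PIP, say $P=I^-(p)$ for some $p\in M$.

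Next, since by Theorem \ref{teo:clt}(i) the inclusion $i\colon M\hookrightarrow \hat{M}$ is an open continuous map onto the open subset $i(M)\subset \hat{M}$, the convergence $P_n\to P\in i(M)$ in $\hat{\mathcal{T}}_c$ forces $P_n\in i(M)$ eventually, say $P_n=I^-(p_n)$ with $p_n\to p$ in the manifold topology. I would then select a point $q\in Q$ coming from a generating future-directed chain of $Q$ with $p\ll q$ and $\tau(p,q)>\htau(P,Q)-\delta/2$; the case $\htau(P,Q)=+\infty$ is handled in parallel by choosing $\tau(p,q)$ arbitrarily large. Translating CLT convergence $Q_n\to Q$ into the relation $Q\subset \liminf Q_n$ yields $q\in Q_n$ for all $n$ sufficiently large, and openness of $I^-(q)$ together with $p_n\to p\ll q$ gives $p_n\ll q$ for all $n$ large.

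The core estimate is then $\htau(P_n,Q_n)\geq \tau(p_n,q)$. To see it, for any future-directed chain $\{r^{(n)}_m\}$ generating $Q_n$ one has $q\ll r^{(n)}_m$ for $m$ large enough (since $q\in Q_n$), and the reverse triangle inequality in $(M,g)$ gives
\[
\tau(p_n, r^{(n)}_m)\;\geq\; \tau(p_n,q)+\tau(q,r^{(n)}_m)\;\geq\; \tau(p_n,q);
\]
passing to the supremum over the chain yields the claim. Finally, continuity of $\tau$ on the globally hyperbolic spacetime $(M,g)$ gives $\tau(p_n,q)\to \tau(p,q)$, and combining this with the choice of $q$ one obtains $\htau(P_n,Q_n)>\htau(P,Q)-\delta$ for all $n$ sufficiently large.

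The main obstacle is extracting the PIP/chain structure from merely CLT-theoretic convergence data: one must ensure both that $P_n$ is eventually a PIP (otherwise $\htau(P_n,Q_n)\equiv 0$ and no lower bound can survive), and that the single witness $q\in Q$ used to approximate $\htau(P,Q)$ lies in every $Q_n$ for $n$ large. Both facts are consequences of the openness of $i(M)$ and the $\liminf$-side of Hausdorff convergence furnished by Theorem \ref{teo:clt}, and this is precisely where the global hyperbolicity hypothesis enters in an essential way (beyond supplying continuity of $\tau$).
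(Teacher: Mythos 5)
Your proof is correct and follows essentially the same route as the paper's: reduce to the case $\htau(P,Q)>0$, use Lemma \ref{clasificacausalidad} and the definition of $\htau$ to write $P=I^-(p)$, use openness of $i(M)$ (equivalently, closedness of $\hat{\partial}M$) to make the $P_n$ eventually proper with $p_n\to p$, exploit the $\liminf$ half of CLT convergence to place a chain element $q$ of $Q$ inside every $Q_n$ for large $n$, and finish with the reverse triangle inequality and continuity (or lower semicontinuity) of $\tau$. If anything, your ordering of quantifiers---fixing $q$ so that $\tau(p,q)$ approximates $\htau(P,Q)$ within $\delta/2$ \emph{before} letting $n$ grow---is tidier than the paper's, which lets the witness $q_k$ depend on $n$ and then passes to the supremum, and your explicit handling of the $\htau(P,Q)=+\infty$ case addresses a point the paper leaves implicit.
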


\begin{proof}
If $\htau(P,Q) = 0$, there is nothing to prove. Otherwise, if $\htau (P,Q) > 0$, then $P \hat{\ll} Q$  and $P = I^{-} (p)$ for some $p\in M$ by Proposition \ref{lemadesigualdad} . Let $\delta > 0$ and $\{q_k\}$ be a future-directed chain generating $Q$, and for every $n$, let $\{q^n _k\}$ be a future-directed chain generating $Q_n$. Given that $P_n$ converges to $P$ and $\hat{\partial} M$ is closed in virtue of Theorem \ref{teo:clt}, there exists $N_1 \in \mathbb{N}$, such that for all $n \geq N_1$, $P_n$ is a PIP, that is, $P_n = I^{-}(p_n)$ with $p_n \in M$ and $p_n$ converges to $p$.

Since $p_n$ converges to $p$, there exists $N_2 \in \mathbb{N}$ such that for all $n \geq N_2$, $I^{-} (p_n) \in \hat{I}^{-}(Q)$, since the latter is an open neighborhood for $I^{-}(p)$. Thus, for all $n \geq N_2$ there exists $k$ such that $p_n \ll q_k \in Q$. Moreover, since $Q = \liminf (Q_n) = \limsup (Q_n)$, then every $q_k$ is in all but a finitely many of the sets $Q_n$. Therefore, there exists $q^{n}_{r}$ with $p_n \ll q_k \ll q^{n}_{r}$ and as a consequence, by the reverse triangle inequality $\tau (p_n, q^{n}_{r}) > \tau (p_n, q_k).$
On the other hand, using that $\tau$ is lower semicontinuous, we have that there exists $N_3 \in \mathbb{N}$ such that for all $n \geq N_3$ 
$$\tau (p_n, q_k) > \tau (p, q_k) - \frac{\delta}{2}.$$
Then, if $n \geq  N = \max \{N_1,N_2,N_3\}$
$$\tau (p_n, q^{n}_{r}) > \tau (p, q_k) - \frac{\delta}{2},$$
whence $\tau (p_n, q^{n}_{r}) + \frac{\delta}{2}$ is an upper bound for $\tau (p, q_k)$. Which implies
$$\tau (p_n, q^{n}_{r}) \geq \sup \{\tau (p, q_k): Q = I^{-}(\{q_k\}) \} - \frac{\delta}{2}.$$
By definition of least upper bound
$$\sup \{\tau (p_n, q^{n}_{r}) : Q_n = I^{-}(\{q^{n}_r\}) \} > \sup \{\tau (p, q_k): Q = I^{-}(\{q_k\}) \} - \delta,$$
which by definition of $\htau$ means that for all $n \geq N$ 
$$\htau (P_n , Q_n) > \htau (P,Q) - \delta.$$
\end{proof}

The following theorem summarizes the results of the previous lemmas.

\begin{theorem}\label{teoremaprincipal}
Let $(M,g)$ be a globally hyperbolic space-time. Then $(\hat{M}, d_c, \hat{\ll}, \hat{\leq} , \htau)$ is a Lorentzian pre-length space.
\end{theorem}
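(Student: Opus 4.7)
The plan is to verify the axioms of a Lorentzian pre-length space one by one, simply assembling the results established in the preceding lemmas and propositions. First, I would invoke Theorem \ref{teo:clt}(iv) to obtain a metric $d_c$ on $\hat{M}$ compatible with the closed limit topology; this supplies the underlying metric space structure, so condition (1) of the definition reduces to naming the metric.

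Next I would argue that $(\hat{M},\hat{\ll},\hat{\leq})$ is a causal space in the required sense. That $\hat{\leq}$ is a pre-order (in fact a partial order, since $(M,g)$ is globally hyperbolic) is exactly the content of Corollary \ref{hatMescausal}. Transitivity of $\hat{\ll}$ was verified immediately after equation \eqref{eq2}. The only axiom not already stated explicitly is the inclusion $\hat{\ll}\,\subset\,\hat{\leq}$, which I would dispatch with a one-line argument directly from the definition \eqref{eq2}: if $P\,\hat{\ll}\,Q$, there exists $q\in Q\setminus P$ with $P\subset I^-(q)\subset Q$, so $P\subset Q$, i.e.\ $P\,\hat{\leq}\,Q$.

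Finally, I would verify the three properties required of the time separation $\htau$. Its being well defined was already established in the discussion preceding Proposition \ref{htauextiende}. Lower semi-continuity with respect to $d_c$ is Lemma \ref{lowersemicontinua}; the reverse triangle inequality along $\hat{\leq}$-ordered triples is Lemma \ref{desigualdadtrianguloinvertida}; and the equivalence $\htau(P,Q)>0\iff P\,\hat{\ll}\,Q$ is Lemma \ref{clasificacausalidad}. Putting these together yields the claim.

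Since all the substantive work has already been carried out in the preceding lemmas, the proof here is a matter of bookkeeping rather than a new argument, so I do not anticipate a genuine obstacle at this step. The only mildly subtle point worth flagging is the inclusion $\hat{\ll}\,\subset\,\hat{\leq}$ noted above, which is not recorded elsewhere but is an immediate consequence of the definition of the chronological relation on $\hat{M}$.
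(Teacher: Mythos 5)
Your proposal is correct and takes essentially the same approach as the paper, which states this theorem with no separate argument, presenting it explicitly as a summary of the preceding lemmas — exactly the bookkeeping you perform (metrizability from Theorem \ref{teo:clt}(iv), the order properties from Corollary \ref{hatMescausal} and the transitivity remark after \eqref{eq2}, and the three properties of $\htau$ from Lemmas \ref{clasificacausalidad}, \ref{desigualdadtrianguloinvertida} and \ref{lowersemicontinua}). The one point you rightly flag as not recorded elsewhere, the inclusion $\hat{\ll}\subset\hat{\leq}$, is argued correctly (since $Q$ is a past set, $q\in Q$ gives $I^-(q)\subset Q$, hence $P\subset Q$) and also follows from $\hat{\ll}\subset{\leq}_\circ$ combined with Proposition \ref{relacionescausalescoinciden}.
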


It is important to note that the relations $\hat{\ll}$, $\hat{\leq}$ as well as the CLT topology are not affected by a conformal change on the spacetime metric $g$. However, this is not the case for the time separation $\tau$. Thus, it is expected to have different Lorentzian pre-length structures in the future causal completion within the same conformal class.


\section{Applications}\label{sec:warped}

In this section we show that a class of warped product spacetimes satisfies the condition of Thrm \ref{teo:opagree} and hence its associated chronological and CLT topologies coincide.  This result can be used to carry out explicit calculations regarding the pre-length structure of the future causal completion.\footnote{A thorough description of the future causal boundary of such spacetimes can be found in \cite{valana}.} Finally, as an illustrative example we consider the particular case of de Sitter spacetime.

Recall that a Lorentzian warped product is a manifold $V=\mb{R} \times M$ furnished with a metric of the form 
$$g=-dt^{2} + \alpha(t) h,$$ 
where, $(M,h)$ is a Riemannian manifold and $\alpha$ is a smooth positive function over $\mb{R}$.

The chronological relation on these spacetimes can be characterized as follows (see \cite[Section 2.2]{joniluis}):
		\begin{lemma}
			Let $(V,g)$ be a Lorentzian warped spacetime. If $(t_{0},x_{0}), (t_{1},x_{1})$ points in $V$, then, 
			\begin{equation*}
				(t_{0},x_{0}) \ll (t_{1},x_{1}) \Leftrightarrow d(x_{0},x_{1}) < \int_{t_{0}}^{t_{1}} \frac{ds}{\sqrt{\alpha(s)}}
			\end{equation*}
		\end{lemma}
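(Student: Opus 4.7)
The plan is to prove both directions by directly exploiting the warped product form of the metric, converting between timelike curves in $V$ and rectifiable curves in $(M,h)$.

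For the forward implication, I would start with a future-directed timelike curve $\gamma(s)=(t(s),x(s))$, $s\in[0,1]$, joining $(t_0,x_0)$ to $(t_1,x_1)$. The timelike condition $g(\dot\gamma,\dot\gamma)<0$ reads
$$\dot t(s)^{2}>\alpha(t(s))\,h(\dot x(s),\dot x(s)),$$
and future-directedness gives $\dot t>0$, so pointwise $\|\dot x(s)\|_h<\dot t(s)/\sqrt{\alpha(t(s))}$. Since this strict inequality holds between two continuous functions on a compact interval, integration preserves strictness, and a change of variable $u=t(s)$ yields
$$d(x_0,x_1)\le\int_{0}^{1}\|\dot x(s)\|_h\,ds<\int_{0}^{1}\frac{\dot t(s)}{\sqrt{\alpha(t(s))}}\,ds=\int_{t_0}^{t_1}\frac{du}{\sqrt{\alpha(u)}},$$
which is the desired estimate. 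A minor caveat is that if the definition of $\ll$ admits only piecewise smooth curves the same computation applies on each smooth piece and the inequality adds up.

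For the reverse implication, I would work by explicit construction. Given $d(x_0,x_1)<\int_{t_0}^{t_1}ds/\sqrt{\alpha(s)}$, choose a smooth curve $\sigma$ in $M$ from $x_0$ to $x_1$ whose length $L$ satisfies $d(x_0,x_1)\le L<\int_{t_0}^{t_1}ds/\sqrt{\alpha(s)}$ (smooth curves approximate the Riemannian distance from above). Reparametrize $\sigma$ by arc length on $[0,L]$. Then define a reparametrization $\tau\colon[t_0,t_1]\to[0,L]$ by
$$\tau(t)=\frac{L}{\displaystyle\int_{t_0}^{t_1}\frac{ds}{\sqrt{\alpha(s)}}}\int_{t_0}^{t}\frac{ds}{\sqrt{\alpha(s)}},$$
so that $\tau(t_0)=0$, $\tau(t_1)=L$, and $\tau'(t)=c/\sqrt{\alpha(t)}$ with constant $c=L/\int_{t_0}^{t_1}ds/\sqrt{\alpha(s)}<1$. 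Setting $\gamma(t)=(t,\sigma(\tau(t)))$ gives a smooth curve from $(t_0,x_0)$ to $(t_1,x_1)$ with
$$g(\dot\gamma,\dot\gamma)=-1+\alpha(t)\,\tau'(t)^{2}\,\|\dot\sigma\|_{h}^{2}=-1+c^{2}<0,$$
so $\gamma$ is timelike and future-directed (since $\dot t=1>0$), proving $(t_0,x_0)\ll(t_1,x_1)$.

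The main technical subtlety I anticipate is only in the forward direction: making sure the pointwise strict inequality really integrates to a strict inequality. This is fine on compact parameter intervals by continuity (the continuous function $\dot t/\sqrt{\alpha(t)}-\|\dot x\|_h$ is strictly positive, hence bounded below by a positive constant), but one should not be careless if working with merely Lipschitz or piecewise smooth curves; in that case the argument is applied on each smooth piece and summed. Everything else reduces to a direct calculation in warped product coordinates, together with the standard fact that the Riemannian distance is the infimum of lengths of smooth curves.
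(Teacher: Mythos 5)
Your proof is correct; note that the paper itself does not prove this lemma but simply cites it from Ak\'e--Flores--Herrera (\emph{Causality and c-completion of multiwarped spacetimes}), where the argument is exactly this standard warped-product computation: project a future-directed timelike curve to the fiber to get the strict length estimate, and conversely lift a nearly minimizing curve in $(M,h)$ with the reparametrization $\tau'(t)=c/\sqrt{\alpha(t)}$, $c<1$. Your handling of the strictness of the integrated inequality and of the degenerate case $x_0=x_1$ is sound, so there is nothing to add.
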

		
 The future causal completion can be characterized depending on the value of $\int_{0}^{+\infty} \frac{ds}{\sqrt{\alpha(s)}}$. Indeed we have\footnote{ Here $\partial_\mathcal{B} M$ and $\partial_CM$ are the Busemann and (metric) Cauchy boundaries or $M$. Refer to \cite{joniluis,valana,isobound} for a detailed account on the structure and topology of $\hat{V}$.}
		
\begin{theorem}\label{XX}
Let $V=\mb{R} \times M$ with $g=-dt^2+\alpha(t) h$ be a warped spacetime and $(M,d)$ a locally compact metric space. Then, 
\begin{enumerate} \setlength\itemsep{1em}				
\item If $\int_{0}^{+\infty} \frac{ds}{\sqrt{\alpha(s)}}= \infty$ ,then, the future causal boundary $\hat{\partial} V$ is an infinite null cone with base $\partial_{\mathcal{B}}M \setminus \partial_{C} M$ with apex in $i^{+}$ and timelike lines over each point in $\partial_{C} M$ and final point in $i^{+}$. Moreover, $\hat{V}$ is homeomorphic to $V \cup (\mb{R} \times \partial_{C} M) \cup (\mb{R} \times \partial_{\mathcal{B}} M) \cup i^{+}$.

\item If $\int_{0}^{+\infty} \frac{ds}{\sqrt{\alpha(s)}} < \infty$, then, $\hat{\partial} V$ is a copy of the Cauchy completion $M^{C}$ of $(M,h)$ and timelike lines over each point in $\partial_{C} M$ that finishes in the same point at the copy at infinity of $M^{C}$. Moreover, $\hat{V}$ is homeomorphic to $V \cup (\mb{R} \times \partial_{C} M) \cup (\{\infty\} \times M^{C})$. 
\end{enumerate}
\end{theorem}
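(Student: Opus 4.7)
The plan is to exploit the conformal flatness of warped products in the time variable. Setting $T(t)=\int_0^t ds/\sqrt{\alpha(s)}$ and noting that $g=\alpha(t)(-dT^2+h)$, the spacetime $(V,g)$ is conformal to the static product $V_0 = J\times M$ with metric $-dT^2+h$, where $J=T(\mathbb{R})$. Since $\ll$, $\hat{\ll}$, $\hat{\leq}$ and the CLT topology are all conformal invariants (as remarked after Theorem \ref{teoremaprincipal}), the classification of IPs and the identification of $\hat{V}$ reduce to an analysis in the static model. The chronology lemma above becomes the transparent condition $(T_0,x_0)\ll(T_1,x_1)\Leftrightarrow d(x_0,x_1)<T_1-T_0$, which is precisely the setting where Busemann and Cauchy completions of $(M,h)$ describe the asymptotic behavior of past sets.

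First, I would determine $\sup J$, distinguishing the divergent integral case (where $\sup J=+\infty$) from the convergent one (where $\sup J=T_\infty<\infty$); the case dichotomy in the statement is exactly this split. Next, I would characterize TIPs via inextensible future-directed chains $\{(T_n,x_n)\}$. Inextensibility forces either $T_n\to\sup J$ or $\{x_n\}$ to leave every compact set of $(M,d)$. Using $d(x_0,x_n)<T_n-T_0$ one shows that such a chain defines a Busemann-type function on $M$, and that two chains generate the same IP exactly when their Busemann functions (or Cauchy limits) coincide. This gives the set-theoretic identification of $\hat{\partial}V$.

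In case (1), the chains split into three classes: (a) those with $x_n$ asymptotically Busemann but not Cauchy, producing a null generator in $\hat{\partial}V$ that rises from the corresponding point of $\partial_{\mathcal{B}}M\setminus\partial_C M$ and terminates at the apex $i^+$; (b) those with $x_n\to\xi\in\partial_C M$ at finite height, producing a timelike line $\{T\}\times\{\xi\}$ above $\xi$ that also ends at $i^+$; and (c) chains whose spatial component remains bounded and whose temporal component diverges, all of which are identified with $i^+$ itself. Assembling these pieces yields the homeomorphism $\hat{V}\cong V\cup(\mathbb{R}\times\partial_C M)\cup(\mathbb{R}\times\partial_{\mathcal{B}}M)\cup\{i^+\}$. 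In case (2), every inextensible chain must push $T_n$ up to the finite value $T_\infty$, so the chains split only into those approaching $\partial_C M$ strictly before $T_\infty$ (giving timelike lines above $\partial_C M$) and those reaching the slice $\{T_\infty\}$, which contribute a whole copy of the Cauchy completion $M^C$ at future infinity.

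Finally, I would upgrade the set-theoretic bijection to a homeomorphism by checking that CLT convergence $P=\liminf P_n=\limsup P_n$ of past sets translates, via the distance inequality above, into Busemann-function convergence on $M$ together with convergence of the heights $\sup T_n$; this is the familiar topology of the Busemann and Cauchy completions. The main obstacle is Step 3: classifying the IPs, proving that Busemann-equivalent chains produce identical past sets while distinct Busemann classes produce distinct IPs, and verifying that all such ideal points glue consistently to the apex $i^+$ in case (1). Rather than reproving the detailed boundary analysis, I would appeal to the explicit descriptions in \cite{joniluis,valana,isobound}, since the novelty here is the Lorentzian pre-length structure rather than the boundary topology itself.
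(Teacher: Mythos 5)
The paper offers no proof of Theorem \ref{XX} at all: it is imported verbatim from the literature, with a footnote deferring to \cite{joniluis,valana,isobound} for the structure and topology of $\hat{V}$. Your sketch is therefore more detailed than the paper itself, and it correctly reproduces the strategy used in those references: conformal reduction to the static model $(J\times M,-dT^2+h)$ via $T(t)=\int_0^t ds/\sqrt{\alpha(s)}$ (legitimate, since the causal completion and its topology are conformal invariants), the dichotomy $\sup J=\infty$ versus $\sup J<\infty$, the classification of TIPs by Busemann-type functions together with the Cauchy completion of $(M,d)$, and the identification of the chronological/CLT topology with Busemann convergence. Since you, like the authors, ultimately delegate the genuinely hard steps (that Busemann-equivalent chains generate the same IP and distinct classes generate distinct IPs, and that the set bijection is a homeomorphism) to the cited works, there is no substantive divergence to report; the one point worth tightening is your dichotomy for inextensible chains, where the cleaner split is on $\lim T_n$: if $T_n\to T^*<\sup J$ then $\sum d(x_n,x_{n+1})<T^*-T_0$ forces $\{x_n\}$ to be Cauchy, so non-convergence places its limit in $\partial_C M$ (yielding the timelike lines), while $T_n\to\sup J$ yields the Busemann points, $M^C$ at infinity, or $i^+$ according to the case.
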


Let $(V,g)$ be a globally hyperbolic warped product spacetime. Then the Riemannian manifold $(M,h)$ is complete as can be seen in \cite[Theorem 3.68]{beem} and hence the Cauchy boundary $\partial_{C} M = \emptyset$. Therefore, if in addition 
$$\int_{0}^{+\infty} \frac{ds}{\sqrt{\alpha(s)}} < \infty ,$$
then by the previous theorem we have that the future causal completion is characterized as $\hat{V} \equiv V \cup (\{+\infty\} \times M)$ and  the causal boundary consists of a copy of $M$ at infinity. As a consequence, any TIP $P \in \hat{\partial} V$ can be identified with a set of the form $P = I^{-}(+\infty, x)$, $x \in M$.\footnote{Here, $I^{-}(+\infty,x):=\{(t_{0},x_{0})  \in V \mid \int_{t_{0}}^{+\infty} \frac{ds}{\sqrt{\alpha(s)}}< \int_{0}^{+\infty} \frac{ds}{\sqrt{\alpha(s)}} - d(x_{0},x) \}$}

\begin{proposition}\label{crono-es-hausdorff}
Let $(V,g)$ be a globally hyperbolic warped product spacetime with $\int_{0}^{+\infty} \frac{ds}{\sqrt{\alpha(s)}} < \infty$. Then the chronological topology $\hat{\mathcal{T}}_{chr}$ in $\hat{V}$ is Hausdorff.
\end{proposition}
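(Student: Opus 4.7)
The plan is to exhibit, for any two distinct IPs $P_1, P_2 \in \hat{V}$, disjoint open neighborhoods in $\hat{\mathcal{T}}_{chr}$. The building blocks will be the open sets $\hat{I}^\pm$ from Theorem \ref{teo:propcomple}(iii). Unfolding the definition of $\hat{\ll}$ together with the causal simplicity of $(V,g)$ and Proposition \ref{lemadesigualdad}, one gets the clean descriptions $\hat{I}^+(a) = \{R \in \hat{V} : a \in R\}$ and $\hat{I}^-(b) = \{I^-(r_0) : r_0 \ll b\}$ for $a, b \in V$; in particular $\hat{I}^+(a) \cap \hat{I}^-(b) = i(I^+(a) \cap I^-(b))$ is the image of the chronological diamond.

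I would then split into three cases according to the types of $P_1, P_2$. \emph{(i)} Both PIPs $I^-(p_i)$ with $p_1 \neq p_2$: strong causality in $V$ furnishes disjoint chronological diamonds $D_i \ni p_i$, and the open sets $i(D_i)$ separate $P_1, P_2$. \emph{(ii)} $P_1 = I^-(p_1)$ a PIP, $P_2 = I^-(+\infty, x_2)$ a TIP: fix a bounded diamond $D_1 \ni p_1$ and pick $a_2 = (s_2, x_2) \in P_2$ with $s_2$ larger than every time coordinate appearing in $D_1$, so that $I^+(a_2) \cap D_1 = \emptyset$; then $i(D_1)$ and $\hat{I}^+(a_2)$ are disjoint neighborhoods. \emph{(iii)} Both TIPs $I^-(+\infty, x_i)$ with $x_1 \neq x_2$: let $\delta = d(x_1, x_2)/3$ and use $F(+\infty) := \int_0^{+\infty} ds/\sqrt{\alpha(s)} < \infty$ to pick $s$ large enough that $F(+\infty) - F(s) < \delta/2$; setting $a_i := (s, x_i) \in P_i$, I claim $\hat{I}^+(a_1) \cap \hat{I}^+(a_2) = \emptyset$.

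To verify the last claim, any $R$ in the intersection must contain both $a_1$ and $a_2$. If $R = I^-(t, y)$ is a PIP, the warped chronological condition $a_i \ll (t, y)$ gives $d(x_i, y) < F(t) - F(s) \leq F(+\infty) - F(s) < \delta/2$ for $i = 1, 2$, and the triangle inequality forces $d(x_1, x_2) < \delta$, contradicting $d(x_1, x_2) = 3\delta$. If $R = I^-(+\infty, y)$ is a TIP, then $a_i \in R$ yields $d(x_i, y) < F(+\infty) - F(s) < \delta/2$ directly, and the same triangle argument closes the case.

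The main obstacle is case (iii). Since TIPs have no chronological future in $\hat{V}$ by Proposition \ref{lemadesigualdad}, there is no past-cone $\hat{I}^-(R)$ containing either $P_i$ that could serve as a cutoff, so both separating open sets are forced to be future cones $\hat{I}^+(a_i)$. The hypothesis $F(+\infty) < \infty$ is what makes the argument work: it ensures that $I^+(a)$ shrinks metrically to diameter zero as the time coordinate of $a$ approaches $+\infty$, and this quantitative shrinkage of future cones near the boundary is precisely the input needed to separate two distinct points of the boundary copy of $M$.
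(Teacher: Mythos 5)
Your proof is correct and follows essentially the same strategy as the paper: the decisive TIP--TIP case is handled by separating $I^-(+\infty,x_1)$ and $I^-(+\infty,x_2)$ with future cones $\hat{I}^{+}(I^{-}(s,x_i))$ where $s$ is chosen so that the tail $\int_{s}^{+\infty} ds/\sqrt{\alpha(s)}$ is a fixed fraction of $d(x_1,x_2)$, then applying the triangle inequality for $d$ --- exactly the paper's argument (with the constant $\tfrac{1}{3}$ in place of your $\tfrac{1}{6}$). Your PIP--TIP case uses a chronological diamond around $p_1$ together with monotonicity of the time coordinate rather than the paper's pairing of $\hat{I}^{-}(I^{-}(t_2,x_1))$ with $\hat{I}^{+}(I^{-}(t_2,x_2))$, but this is only a minor variation and both arguments are sound.
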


\begin{proof}
Recall that in $i(V)$, the chronological topology agrees with the topology of the space-time $V$, which is Hausdorff. Thus there are only two cases left to consider: 
\begin{enumerate} 
\item $P$ is PIP and $Q$ is a TIP,
\item $P$ and $Q$ are both TIPs.
\end{enumerate}

\begin{figure}[ht!]
\centering \includegraphics[scale=0.2]{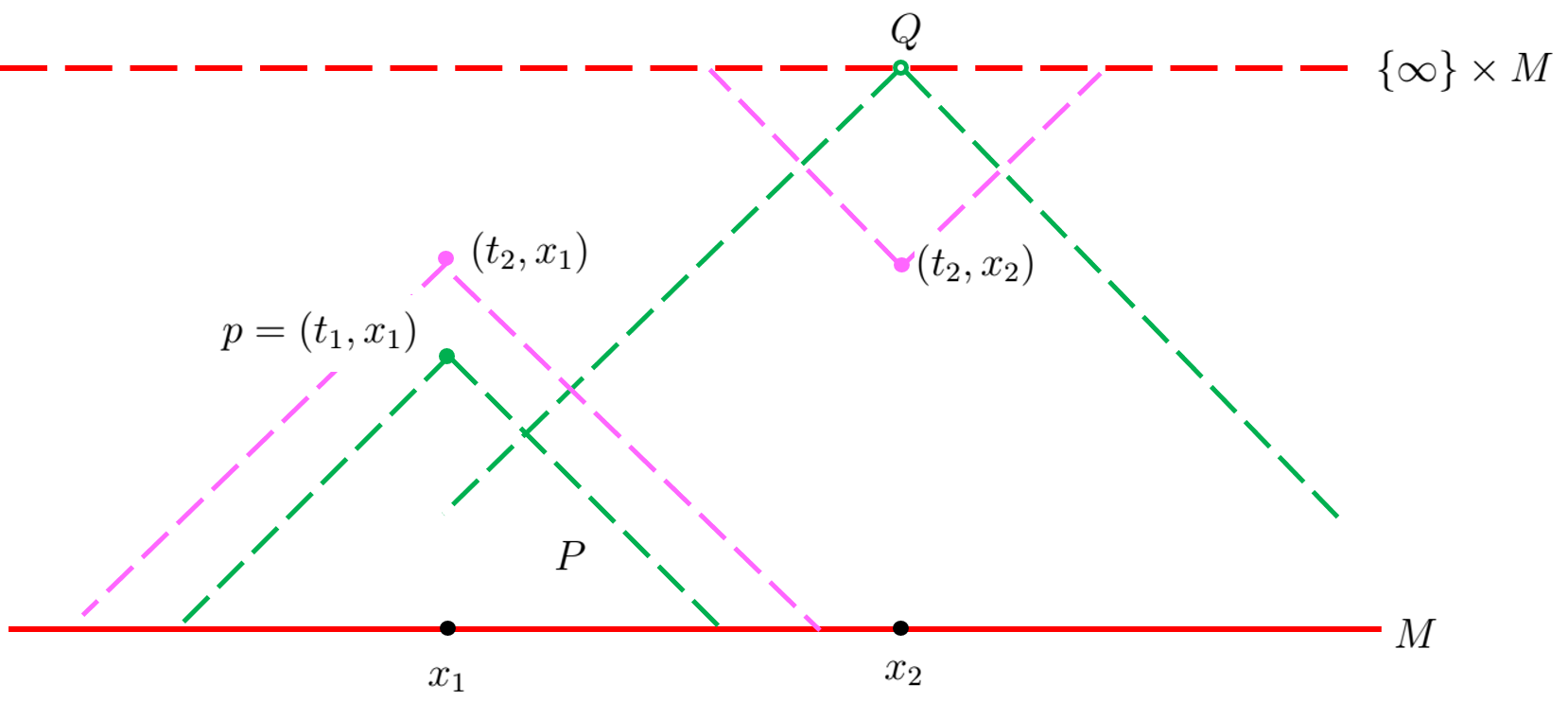}
\centering{\caption{Case (1). $P$ is proper and $Q$ is terminal.}}
\end{figure}

\textbf{Case 1.} Consider $p = (t_1 , x_1)$ a point in $V$ and $P = I^{-} (p)$ its correspondent past set; and take $Q = (+\infty , x_2) \in \hat{\partial} V$. We know, by the definition of the chronological relation in $V$, that for $t_2 > t_1$, we have
$$(t_1 , x_1) \ll (t_2 , x_1) $$
because
$$ 0 = d (x_1, x_1) < \int_{t_1}^{t_2} \frac{ds}{\sqrt{\alpha (s)}}.$$
Note that $(t_2,x_2) \in Q$ since $(t_2 ,x_2) \ll (t, x_2)$ for all $t>t_2$. Hence, there exists $q_n \in Q$ with $(t_2 , x_2) \ll q_n$ and as a consequence ${I}^{-}(t_2,x_2) \hat{\ll} Q$. Therefore $Q \in \hat{I}^{+} (I^{-} (t_2, x_2)).$ On the other hand, since $(t_1, x_1) \ll (t_2,x_1)$ then $P \hat{\ll} I^{-} (t_2,x_1)$, and by definition $P \in \hat{I}^{-} ( I^{-} (t_2,x_1))$. Now we need to prove
$$\hat{I}^{+} (I^{-} (t_2, x_2)) \cap \hat{I}^{-} ( I^{-} (t_2,x_1)) = \emptyset.$$
By contradiction, suppose there exists $R = I^{-} (r,y) \in \hat{V}$, with $r \in \mathbb{R}$, such that 
$$R \in \hat{I}^{+} (I^{-} (t_2, x_2)) \cap \hat{I}^{-} ( I^{-} (t_2,x_1)) = \emptyset.$$
Note that $r < \infty$ because otherwise $R$ and $Q$ would be two TIPs with $R \hat{\ll} Q$, which cannot occur since $V$ is globally hyperbolic. Thus, $R$ is proper and
$$I^{-} (t_2, x_2) \hat{\ll} I^{-} (r,y) \hat{\ll} I^{-} (t_2, x_1)$$
which happens by Proposition \ref{relacionesextendidas} if and only if
$$(t_2,x_2) \ll (r,y) \ll (t_2, x_1),$$
and by transitivity
$$(t_2, x_2) \ll (t_2, x_1),$$
that is,
$$0 \leq d(x_2,x_1) < \int_{t_2}^{t_2} \frac{ds}{\sqrt{\alpha (s)}} = 0,$$
a contradiction. This concludes case 1.

\begin{figure}[ht!]
\centering \includegraphics[scale=0.2]{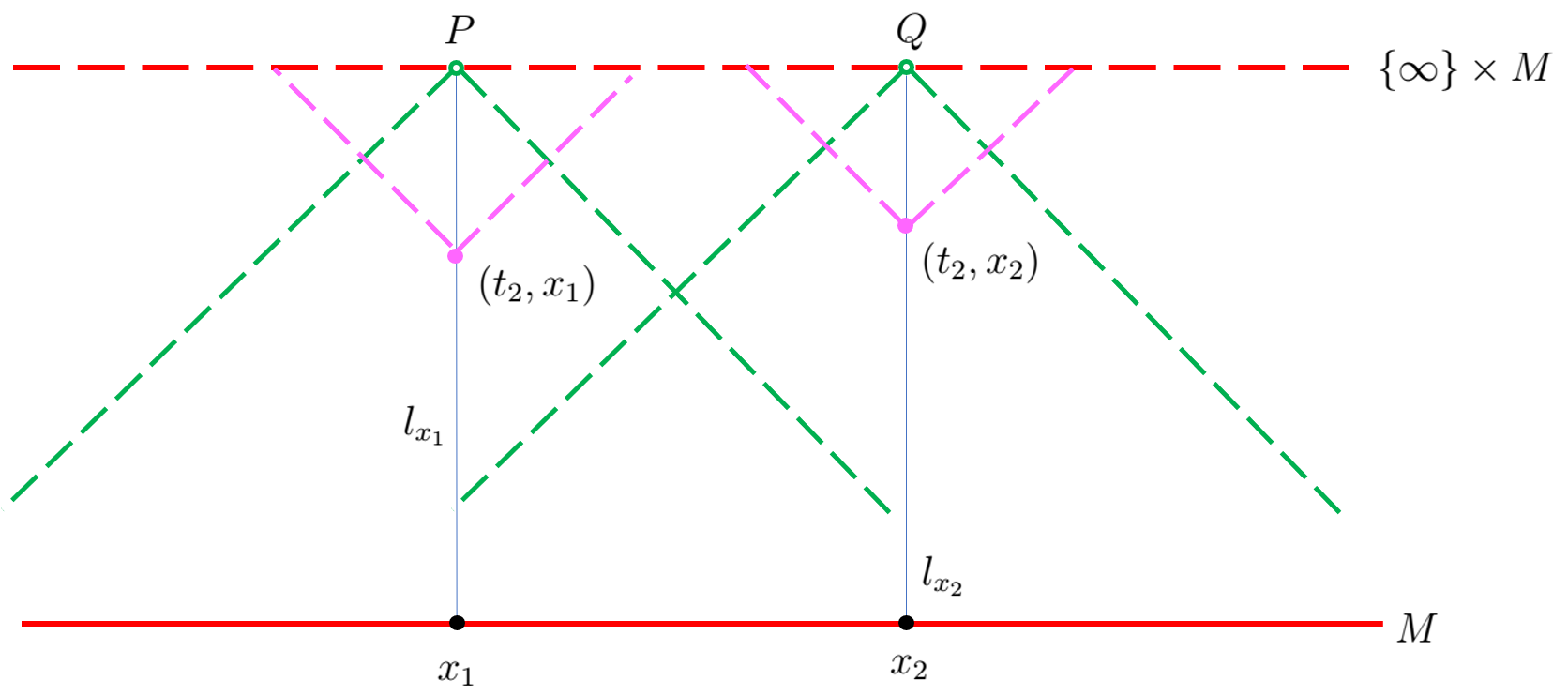}
\caption{Case 2. $P$ and $Q$ are both terminal.}
\end{figure}

\textbf{Case 2.} Consider $P,Q \in \hat{\partial}V$ with $P \neq Q$. We can represent these sets as
$$P = I^{-} (\infty, x_1), \qquad Q = I^{-} (\infty , x_2).$$
Since $\int_{0}^{\infty} \frac{ds}{\sqrt{\alpha (s)}} < +\infty$ we can take $t_1,t_2 \in \mathbb{R}$ such that
$$\int_{t_1}^{\infty} \frac{ds}{\sqrt{\alpha (s)} }< \frac{1}{3} d(x_1, x_2) \quad \text{and} \quad \int_{t_2}^{\infty} \frac{ds}{\sqrt{\alpha (s)} }< \frac{1}{3} d(x_1, x_2)$$
which implies $(t_1, x_1) \notin Q$ and $(t_2,x_2) \notin P$, respectively.

Without loss of generality suppose $t_1 < t_2$. We know that
$$P \in \hat{I}^{+} (I^{-} (t_2, x_1)), \qquad Q \in \hat{I}^{+} (I^{-} (t_2, x_2)),$$
so it is enough to prove that this sets are disjoint, i.e.
$$\hat{I}^{+} (I^{-} (t_2, x_1)) \cap \hat{I}^{+} (I^{-} (t_2, x_2)) = \emptyset.$$
We proceed by contradition. Take $R = I^{-} (r,y) \in \hat{V}$ such that 
$$R \in \hat{I}^{+} (I^{-} (t_2, x_1)) \cap \hat{I}^{+} (I^{-} (t_2, x_2)),$$
whence
$$ I^{-} (t_2, x_1) \hat{\ll} I^{-} (r,y) \quad \text{and} \quad I^{-} (t_2, x_2) \hat{\ll} I^{-} (r,y) $$
by Proposition \ref{relacionesextendidas} if and only if
$$ (t_2,x_1) \ll (r,y) \quad \text{and} \quad (t_2,x_2) \ll (r,y)$$
which happens if and only if
$$d(x_1, y) < \int_{t_2}^{r} \frac{ds}{\sqrt{\alpha (s)}} \quad \text{and} \quad d(x_2, y) < \int_{t_2}^{r} \frac{ds}{\sqrt{\alpha (s)}}$$
where adding both inequalities and using the triangle inequality for $d$ we get
$$d (x_1, x_2) \leq d(x_1,y) + d(x_2,y) < 2 \int_{t_2}^{r} \frac{ds}{\sqrt{\alpha (s)}}$$
and so
$$\int_{t_2}^{\infty} \frac{ds}{\sqrt{\alpha (s)}}  < \int_{t_2}^{r} \frac{ds}{\sqrt{\alpha (s)}},$$
where, being $r$ finite or infinite, we get a contradiction.
\end{proof}

The next result is an immediate consequence of Proposition \ref{crono-es-hausdorff} and Theorem \ref{teo:opagree}.

\begin{coro}
Let $(V,g)$ be a globally hyperbolic warped product spacetime with $\int_{0}^{+\infty} \frac{ds}{\sqrt{\alpha(s)}} < \infty$. Then $\hat{\mathcal{T}}_{chr}=\hat{\mathcal{T}}_{c}$.
\end{coro}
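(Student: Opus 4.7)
The plan is to observe that this corollary is an essentially immediate consequence of combining the two results already established just prior to it, namely Proposition \ref{crono-es-hausdorff} and Theorem \ref{teo:opagree}. There is no genuine obstacle to overcome; the work was already done in the proof of Proposition \ref{crono-es-hausdorff}, which established the Hausdorff property through a careful case analysis using the explicit characterization of TIPs in warped product spacetimes with finite conformal age.

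Concretely, I would proceed as follows. First, I would invoke Proposition \ref{crono-es-hausdorff} to conclude that, under the hypothesis $\int_{0}^{+\infty} \frac{ds}{\sqrt{\alpha(s)}} < \infty$ on the globally hyperbolic warped product $(V,g)$, the chronological topology $\hat{\mathcal{T}}_{chr}$ on $\hat{V}$ is Hausdorff. Then I would apply the equivalence in Theorem \ref{teo:opagree}: the Hausdorff property of $\hat{\mathcal{T}}_{chr}$ is equivalent to the coincidence of the two limit operators $\hat{L}_{chr}$ and $\hat{L}_{H}$ on $\hat{V}$.

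Finally, I would note that the two topologies $\hat{\mathcal{T}}_{chr}$ and $\hat{\mathcal{T}}_{c}$ are, by construction, determined by declaring a set $C \subset \hat{V}$ closed precisely when it is stable under the corresponding limit operator (see the definitions of $\hat{\mathcal{T}}_{chr}$ via $\hat{L}_{chr}$ and of $\hat{\mathcal{T}}_{c}$ via $\hat{L}_{H}$ given in Section \ref{sec:prelim}). Since the two limit operators coincide, so do their families of closed sets, and hence $\hat{\mathcal{T}}_{chr}=\hat{\mathcal{T}}_{c}$, as required. The entire argument is a direct citation of the two preceding results, with no additional computation needed; the only conceptual point worth flagging is that the equivalence between limit operators and topologies is built into the very definitions of these two topologies in terms of their closed sets.
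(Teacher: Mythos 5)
Your proposal is correct and follows exactly the paper's route: the paper presents this corollary as an immediate consequence of Proposition \ref{crono-es-hausdorff} and Theorem \ref{teo:opagree}, with no further argument given. Your additional remark that coinciding limit operators yield coinciding topologies (since both topologies are defined by declaring closed those sets stable under the respective limit operator) is the implicit final step, correctly identified.
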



An important fact to highlight  is that on these spacetimes, when $P = I^{-} (p)$ is a PIP and $Q$ is a TIP, their time separation is either zero or infinity. 

\begin{proposition}
If $P,Q \in \hat{V}$ with $Q \in \hat{\partial} V$ then either $\htau (P,Q) = 0$ or $\htau (P,Q) = \infty$.
\end{proposition}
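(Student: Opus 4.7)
The approach is a case analysis on $P$. If $P \in \hat{\partial} V$, then $\hat{\tau}(P,Q) = 0$ is immediate from clause (i) of the definition of $\hat{\tau}$. Otherwise $P = I^-(p)$ for some $p = (t_0, x_0) \in V$. Lemma \ref{lema:rev} disposes of the cases $p \notin Q$ and $p \in \partial Q$, both of which force $\hat{\tau}(P,Q) = 0$. The only substantive case is $p \in Q$ (equivalently, in the interior of $Q$, since $Q$ is open), and in this case I aim to show $\hat{\tau}(P,Q) = +\infty$.

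For the substantive case, I exploit the warped-product structure together with the standing assumption $\int_0^{+\infty} ds/\sqrt{\alpha(s)} < \infty$. By Theorem \ref{XX}, every TIP in $\hat{V}$ has the form $Q = I^-(+\infty, x)$ for some $x \in M$, and $p \in Q$ translates, via the explicit footnote description of $I^-(+\infty,x)$, to
\[
d(x_0, x) < \int_{t_0}^{+\infty} \frac{ds}{\sqrt{\alpha(s)}}.
\]
Hence there is a finite $t' > t_0$ with $d(x_0, x) < \int_{t_0}^{t'} ds/\sqrt{\alpha(s)}$, i.e.\ $p \ll (t', x)$ in $V$. I then propose to work with the concrete generating chain $q_n := (t_n, x)$, where $(t_n)$ is any strictly increasing sequence with $t_1 > t'$ and $t_n \to +\infty$. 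This chain is future directed since $0 < \int_{t_n}^{t_{n+1}} ds/\sqrt{\alpha(s)}$ implies $(t_n, x) \ll (t_{n+1}, x)$; and $\bigcup_n I^-(q_n) = I^-(+\infty, x) = Q$, so $\{q_n\}$ legitimately generates $Q$.

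The last step is a lower bound on $\tau(p, q_n)$. The purely vertical timelike segment $s \mapsto (s, x)$ on $[t', t_n]$ has $g$-length exactly $t_n - t'$, hence $\tau((t',x), (t_n,x)) \geq t_n - t'$. Combining this with the reverse triangle inequality in $V$ along $p \ll (t', x) \ll (t_n, x)$ yields
\[
\tau(p, q_n) \;\geq\; \tau(p,(t',x)) + (t_n - t'),
\]
which diverges as $n \to \infty$. By the definition of $\hat{\tau}$ this gives $\hat{\tau}(P,Q) = \sup_n \tau(p, q_n) = +\infty$, as required.

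The main obstacle is really just the geometric observation that each fiber $\mathbb{R} \times \{x\}$ behaves as a timelike line accumulating unbounded $\tau$-length toward the future causal boundary; once one sees that the chain generating $Q$ can be taken along this fiber, everything else reduces to the reverse triangle inequality in $V$ plus the definition of $\hat{\tau}$.
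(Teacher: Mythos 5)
Your proof is correct and follows essentially the same route as the paper's: dispose of the degenerate cases via the definition of $\htau$ and Lemma \ref{lema:rev}, then in the substantive case choose the vertical generating chain $q_n=(t_n,x)$ along the fiber over the base point of the TIP and use that the vertical segment has $g$-length $t_n-t'$ together with the reverse triangle inequality to force the supremum to diverge. The only difference is cosmetic: you justify more explicitly that the vertical chain generates $Q$ and that $p\ll(t',x)$ for some finite $t'$, where the paper invokes chain-independence of $\htau$ and an unspecified $q_N$.
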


\begin{proof}
If $P \hat{\not\ll} Q$ or $P \in \hat{\partial} V$ then $\htau (P,Q) = 0$ and we are done. Now let us assume that $ P = I^{-}(p) \hat{\ll} Q = I^{-} (\infty, a )$ and that $\{q_n\}$ is a future-directed chain generating $Q$. By chronology, there exists $q_N \in Q$ with $p \ll q_N \ll q_n$ for all $n > N$ and since $V$ is globally hyperbolic, $\tau (p,q_N) < \infty$. Given that $\htau$ does not depend on the election of the chain that generates Q, therefore we can choose a chain such that
$$q_n = (t_{n}, a),$$
which is based on the same spatial constant coordinate $a \in M$ as $Q$. From here, since $Q$ is terminal, $q_n$ does not converge in $V$ and then $t_{n} \to \infty$. This implies\footnote{Note that  if $\gamma: [t_1,t_2] \to V$ is a vertical curve, that is, $\gamma (t) = ( t , a)$ with $a \in M$ fixed, then $L_g (\gamma) = \int_{t_1}^{t_2} \mid \dot{\gamma} \mid dt =  t_2 - t_1$. This indicates that the $g$-length of $\gamma$ grows to infinity as $t_2\to\infty$.} 
$$\tau (p,q_n) \geq \tau (p, q_N) + \tau (q_N, q_n) > \tau (q_N,q_n) \geq t_n - t_N.$$
Thus $\htau(P,Q) = \infty$ for any $P$ a PIP and $Q$ a TIP chronologically related.
\end{proof}

We end up this section providing an explicit example. Consider de Sitter spacetime $\mathbb{S}^4_1$, the Lorentzian spaceform of constant sectional curvature $K=1$. It can be realized as the hyperboloid
$$-v^2 + w^2 + x^2 + y^2 + z^2 = 1$$
in flat five-dimensional Minkowski space $\mathbb{R}^5_1=(\mathbb{R}^5,L)$. Recall $\mathbb{S}^4_1$ can also be described as the globally hyperbolic warped product
$$(\mathbb{R} \times \mathbb{S}^3, -dt^2 + \cosh ^2 (t) g_{\mathbb{S}^3}),$$
where $g_{\mathbb{S}^3}$ is the standard round metric on $\mathbb{S}^3$. Thus, $$\int_{0}^\infty \frac{ds}{\sqrt{\alpha(s)}} = \int_{0}^\infty \frac{ds}{\cosh (s)} = \frac{\pi}{2} < +\infty.$$
Moreover, by Theorem \ref{XX} the future causal boundary $\hat{\partial} \mathbb{S}^4_1$ is a copy at infinity of the base manifold $\mathbb{S}^3$, a picture that agrees with its standard future conformal infinity $\mathcal{I}^+$ \cite{HW}.

Let $p\leq q$ be two causally related points in $\mathbb{S}^4_1$. Since the time separation function is given by
$$\cosh (\tau (p,q)) = L(p,q),$$
then 
$$\tau (p,q) = t_2 - t_1$$
for $p=(t_1,x)$, $q=(t_2,x)$ two points on the same geodesic normal to $\mathbb{S}^3$. Thus, if $t_2 \to \infty$ then $\tau (p,q) \to \infty$. If we consider a PIP $P = I^{-} (t,x)\in \hat{\mathbb{S}}^4_1$  and a TIP $Q = I^{-} (\infty , y) \in \hat{\partial} \mathbb{S}^4_1$ a TIP, then by definition
$$\htau (P,Q) = \sup \{\tau (p,q_n): Q = I^{-}(\{ q_n \})\},$$
where $\{ q_n\}$ is a future-directed chain generating $Q$. Choose a chain such that
$$q_n = (t_{n}, y),$$
which is based on the same spatial coordinate $y \in \mathbb{S}^3$. From here, since $Q$ is terminal, $q_n$ does not converge in $\mathbb{S}^4_1$ and then $t_{n} \to \infty$.

We know that if $p=(t,x) \notin Q$ or $p \in \partial Q$ we have $\htau (I^{-} (p), Q) = 0$ by Lemma \ref{lema:rev}. On the other hand, if $p \in Q$, there exists $N \in \mathbb{N}$ with $p \ll q_N \ll q_n$ for all $n > N$. By the reverse triangle inequality for $\tau$ we have
$$\tau (p,q_n) \geq \tau (p,q_N) + \tau(q_N,q_n) > \tau(q_N,q_n) \geq t_n - t_N,$$
where the second term on the right-hand-side goes to infinity as $n \to \infty$. Therefore
$$\htau (P,Q) = \infty.$$

In summary, the time separation function $\htau$ for the future causal completion of de Sitter spacetime is given by
\begin{equation*}
\htau (P,Q) =\left\lbrace\begin{array}{ccc}
\tau (p,q) & \text{if} & P = I^{-}(p), Q= I^{-}(q) \\
0 & \text{if} & P \hat{\nleq} Q \\
\infty & \text{if} & P \hat{\ll} Q \text{ and } Q \in \hat{\partial} \mathbb{S}^4_1.
\end{array}\right.
\end{equation*}

\section*{Acknowledgements} The authors would like to thank the organizing committee of SCRI21, for providing a source of great mathematical ideas for years to come. L. Ake Hau was partially supported by Conacyt under grants SNI 367994 and Estancia Posdoctoral por M\'exico 854544. S. Burgos acknowledges the support of Conacyt under the Becas Nacionales program (1089466). D. Solis was partially supported by Conacyt under grant SNI 38368.



\vspace{1cm}

\noindent \textbf{Luis Ak\'e Hau}. Facultad de Matem\'aticas, Universidad Aut\'onoma de Yucat\'an, Perif\'erico Norte Tablaje 13615,  C.P. 97110, M\'erida, M\'exico. \\
luis.ake@correo.uady.mx

\vspace{.3cm}


\noindent \textbf{Saul Burgos}. Facultad de Matem\'aticas, Universidad Aut\'onoma de Yucat\'an, Perif\'erico Norte Tablaje 13615,  C.P. 97110, M\'erida, M\'exico. \\
saul.burgos@alumnos.uady.mx

\vspace{.3cm}

\noindent \textbf{Didier A. Solis}.  Facultad de Matem\'aticas, Universidad Aut\'onoma de Yucat\'an, Perif\'erico Norte Tablaje 13615,  C.P. 97110, M\'erida, M\'exico. \\
didier.solis@correo.uady.mx


\end{document}